  \newtheorem{theorem}{Theorem}[section]
  \newtheorem{lemma}[theorem]{Lemma}
  \newtheorem{definition}{Definition}[section]
  \newtheorem{example}[definition]{Example}
  \renewenvironment{proof}{{\noindent\bf{Proof }}}{\hfill $\blacksquare$\\}
\begin{document}
		\noindent \rule{\textwidth}{0.5mm}\\
	\vspace{0.1cm}\\
	\noindent\textbf{\Large Linear codes with few weights over $\mathbb{F}_{p}+u\mathbb{F}_{p}$}\\
	\vspace{0.1cm}\\
	\textbf{Pavan Kumar$^{\boldsymbol{1,2},\ast}$ and Noor Mohammad Khan$^{\boldsymbol{1}}$}\\
	{$^{1}$Department of Mathematics, Aligarh Muslim University, Aligarh-202002, India}\\
	{$^{2}$Department of Electronic Systems Engineering, Indian Institute of Science, Bengaluru 560012, India}\\
	\let\thefootnote\relax\footnote{\noindent$^{\ast}$Corresponding author\\  Pavan Kumar\\ pavan4957@gmail.com\\  \vspace*{0.1cm}\\ Noor Mohammad Khan\\nm\_khan123@yahoo.co.in}
	\vspace{1.5cm}\\
	\textbf{\large Abstract}\\
	For any positive integer $m$ and an odd prime $p$; let $\mathbb{F}_{q}+u\mathbb{F}_{q}$, where $q=p^{m}$, be a  ring extension of the ring $\mathbb{F}_{p}+u\mathbb{F}_{p}.$
	In this paper, we construct linear codes over $\mathbb{F}_{p}+u\mathbb{F}_{p}$ by using trace function defined on  $\mathbb{F}_{q}+u\mathbb{F}_{q}$ and determine their Hamming weight distributions by employing symplectic-weight distributions of their Gray images.
	\\\\
	{\bf{\large Keywords}}  Linear code; Weight distribution; Gauss sum; Cyclotomic number\\\\
	{\bf{\large Mathematics Subject Classification (2010)}} 94B05; 11T71
\section{Introduction}\label{sec1}
Throughout this paper, let  $p$ be an odd prime and let $\mathbb{F}_{q}$ be the finite field with $q=p^m$ elements for any positive integer $m$. Denote by $\mathbb{F}_{q}^{*}=\mathbb{F}_{q}\setminus\{0\}$,   the multiplicative group of $ \mathbb{F}_{q}$. An $(n, M)$ code over $\mathbb{F}_{p}$ is a subset of $\mathbb{F}_{p}^{n}$ of size $M$.  An $[n, k, d]$ linear code $\mathcal {C}$ is a $k$-dimensional subspace of $ \mathbb{F}_{p}^{n} $ with minimum Hamming-distance $d$. The vectors in a linear code $\mathcal{C}$ are known as \emph{codewords}. It is well known that linear codes are easier to describe, encode and decode than nonlinear codes. Therefore, they have been an interesting topic in both theory and practice for many years. 
Moreover, the linear codes  over finite fields  have been studied extensively due to their applications in consumer electronics, data storage systems and communication systems. A lot of linear codes with a few weights applicable in secret sharing \cite{ADHK98, CDY05}, association schemes \cite{CG84} and authentication codes \cite{DW05} have been constructed  by many researchers.
The number of nonzero coordinates in $ c\in \mathcal{C} $ is called the Hamming-weight $\text{wt}(c)$ of a codeword $c$. Let $A_{i}$ denote the number of codewords with Hamming weight $i$ in a linear code $\mathcal{C}$ of length $n$. The weight enumerator of a code $\mathcal{C}$ is a polynomial defined by
$$1+A_1z+A_2z^2+\cdots+A_nz^n,$$
where $(1, A_{1}, \ldots, A_{n})$ is called the \emph{weight distribution}  of $\mathcal{C}$. Throughout the paper, for any set $S$, $\#S$  denotes the cardinality of the set $S$. If $ \#\{i:A_i\neq0, 1\leq i\leq n\}=t$, then the code $\mathcal{C}$ is said to be a $t$-weight code. Several classes of linear codes with various weights have been constructed \cite{D09,DD14,DD15,DKS01,DY13, ZD14,LCXM18,SB12}.

Let $D=\{d_{1},d_{2},\ldots,d_{n}\}\subset\mathbb{F}_{q}^{*}$. A linear code $\mathcal{C}_{D}$ of length $n$ over $\mathbb{F}_{p}$ is defined by
\begin{equation}\label{eq1}
\mathcal{C}_{D}=\{(\mathrm{Tr}(xd_{1}),\mathrm{Tr}(xd_{2}),\ldots,\mathrm{Tr}(xd_{n})):x\in\mathbb{F}_{q}\},
\end{equation}
where $\mathrm{Tr}$ denotes the absolute trace function from $\mathbb{F}_{q}$ onto $\mathbb{F}_{p}$. The set $D$ is called the defining set of this code $\mathcal{C}_{D}$. This construction was proposed by Ding et al. \cite{DD15}. By choosing defining set properly, some optimal linear codes with few weights can be obtained (see\cite{D15,DD14,DD15,LYL14,LWL18,WDX15}).

Modules over finite rings entered in the coding theory in 1990's with  a wide range of applications from number theory (unimodular lattices \cite{A98}) to engineering (low correlation sequences \cite{BGO07}) and to combinatorics (designs \cite{AM74}). In the field of codes over finite rings, the results of Nechaev \cite{N89} and of Hammons et al. \cite{HKCSS94} showed  a significant improvement. Their results settle that the linear codes over the ring $\mathbb{Z}_{4}$  are closely related to nonlinear Kerdock \cite{K17} and Preparata \cite{P68} binary codes.  It has been proved that the codes  over finite rings have more codewords than the ones over finite fields with the same length. The length of the Preparata code is same as of the extended double error-correcting BCH code but Preparata code has twice as many codewords in BCH \cite{HKCSS94}. At present, mostly researchers focus on cyclic codes over  finite rings(see \cite{S09} and the references therein). 

Let  $\mathcal{R}_{m}=\mathbb{F}_{q}+u\mathbb{F}_{q}$ be a ring extension of $\mathcal{R}(=\mathbb{F}_{p}+u\mathbb{F}_{p})$ with degree $m$. A linear code $\mathcal{C}$ of length $n$ over $\mathcal{R}_{m}=\mathbb{F}_{q}+u\mathbb{F}_{q}$ is an $\mathcal{R}_{m}$-submodule of $\mathcal{R}_{m}^{n}$. We can find linear codes of length $n$ over $\mathcal{R}$ based on construction in \eqref{eq1}, but in this case $D=\{d_{1},d_{2},\ldots,d_{n}\}\subset\mathcal{R}_{m}$, and $\mathrm{tr}$ is a linear function from $\mathcal{R}_{m}$ to $\mathcal{R}$.

In this paper, we  investigate the Hamming-weight distributions of the linear codes over the finite ring $\mathbb{F}_{p}+u\mathbb{F}_{p}$ by employing symplectic weight distributions of their Gray images. Some research over weight distributions of the linear codes over  $\mathcal{R}$  based on construction in \eqref{eq1} has been done for some special cases \cite{P68,SLS16,LL19}.

 The complete paper is organized as follows: Section \ref{sec2} recalls some definitions and results on group characters and Gauss sums over finite fields; Section \ref{sec3} completes the ground work to establish  main results; Section \ref{sec4} have the main results and  Section \ref{sec5} concludes this paper.
\section{Preliminaries}\label{sec2}

Let $u$ be an indeterminant such that $\mathbb{F}_{p}+u\mathbb{F}_{p}$ forms a ring under usual addition and multiplication.
For an odd prime $p$ and a positive integer $m>2$, we construct the ring extension $\mathcal{R}_{m}=\mathbb{F}_{q}+u\mathbb{F}_{q}$ of $\mathcal{R}(=\mathbb{F}_{p}+u\mathbb{F}_{p})$ with degree $m$. The \emph{Trace function} from $\mathbb{F}_{q}+u\mathbb{F}_{q}$ onto $\mathbb{F}_{p}+u\mathbb{F}_{p}$ is defined as
$$\mathrm{tr}=\sum_{j=o}^{m-1}F^{j},$$
where $F$ is the Frobenius operator which maps $a+ub$ to $a^{p}+ub^{p}$.
For all $a, b\in\mathbb{F}_{q}$, it can easily be checked that
$\mathrm{tr}(a+ub)=\mathrm{Tr}(a)+u\mathrm{Tr}(b),$  where $\mathrm{Tr}$ stands for the absolute trace function from $\mathbb{F}_{q}$ onto $\mathbb{F}_{p}$.

 The Gray map $\phi$ from $\mathcal{R}^{n}$ to $\mathbb{F}_{p}^{2n}$ is defined by
$$ \phi(a_{1}+ub_{1},a_{2}+ub_{2},\ldots,a_{n}+ub_{n})=(a_{1},a_{2},\ldots,a_{n},b_{1},b_{2},\ldots,b_{n}),$$
for each $r=(a_{1}+ub_{1},a_{2}+ub_{2},\ldots,a_{n}+ub_{n})\in\mathcal{R}^{n}$. It is a one to one map from $R^{n}$ to $\mathbb{F}_{p}^{2n}$. Symplectic-weight of a vector $(\boldsymbol{a}|\boldsymbol{b})$ in $\mathbb{F}_{p}^{2n}$ is defined by
$$\mathrm{swt}((\boldsymbol{a}|\boldsymbol{b})) = \#\{k:(a_{k},b_{k})\neq(0,0)\}. $$ It is easy to see that \emph{Hamming-weight} of $\textbf{c}\in R^{n}$ is equal to the \emph{symplectic-weight} of the Gray image $\phi(\textbf{c})$.
The \emph{Hamming-distance} of $x, y\in R^{n}$ is defined as $w_{H}(x-y)$ while the symplectic distance $d_{S}$ of $\textbf{x}, \textbf{y}\in \mathbb{F}_{p}^{2n}$ is defined as swt(\textbf{x}-\textbf{y}). Thus the Gray map is, by construction, a linear isometry from $(R^{n}, d{_H}) $ to $(\mathbb{F}_{p}^{2n}, d_{S})$.

Let $a$ be a primitive element of $\mathbb{F}_{q}$ and let $q=Nh+1$ for some positive integers $N>1$ and $h>1$. The \emph{cyclotomic classes} of order $N$ in $\mathbb{F}_{q}$ are the cosets $\mathcal{C}_{i}^{(N, q)}=a^{i}\langle a^{N}\rangle$ for $i=0,1,\ldots,N-1$, where $\langle a^{N}\rangle$ denotes the subgroup of $\mathbb{F}_{q}^{*}$ generated by $a^{N}$. It is obvious that $\#\mathcal{C}_{i}^{(N,q)}=h$. For fixed $i$ and $j$, the \emph{cyclotomic number} $(i,j)^{(N,q)}$ is defined to be the number of solutions of the equation $$x_{i}+1=x_{j},\quad x_{i}\in\mathcal{C}_{i}^{(N,q)},x_{j}\in\mathcal{C}_{j}^{(N,q)},$$
where $1=a^{0}$ is the multiplicative identity of $\mathbb{F}_{q}$. That is, $(i,j)^{(N,q)}$ is the number of ordered pairs $(s,t)$ such that
$$a^{Ns+i}+1=a^{Nt+j},\quad0\leq s,t\leq h-1.$$

Now, we present some notions and results about group characters and Gauss sums  for later use (see \cite{LN97} for details).
An additive character $\chi$ of $\mathbb{F}_{q}$ is a mapping from $\mathbb{F}_{q}$
into the multiplicative group of complex numbers of absolute value 1 with $\chi\ (g_{1} + g_{2})=\chi\ (g_{1})\chi(g_{2})$ for all  $ g_{1},  g_{2}\in \mathbb{F}_{q}$.\\
By (\cite{LN97}, Theorem 5.7), for any $ b\in  \mathbb{F}_{q} $,
\begin{equation}\label{2.1}
\chi_{b}(x)= \zeta_{p}^{ \mathrm{Tr}(bx)},\quad \text{for all }x\in \mathbb{F}_{q},
\end{equation}
defines an additive character of $ \mathbb{F}_{q} $, where $\zeta_{p}=e^{\frac{2\pi\sqrt{-1}}{p}}$, and every additive character can be obtained in this way. An additive character defined by $ \chi_{0}(x)=1 $ for all $ x\in \mathbb{F}_{q}$ is called the trivial character and all other characters are called nontrivial characters. The character $ \chi_{1} $ in \eqref{2.1} is called the canonical additive character of $ \mathbb{F}_{q}$. The orthogonal property of additive characters of $ \mathbb{F}_{q} $ can be found in (\cite{LN97}, Theorem 5.4) and is given as
\begin{equation}
\sum_{x\in \mathbb{F}_{q}}\chi(x)=
\begin{cases}
q, &\text{if $\chi$ trivial};\\
0, &\text{if $\chi$ non-trivial}.
\end{cases}
\end{equation}
\\
Characters of the multiplicative group $ \mathbb{F}^{*}_{q}$  of $\mathbb{F}_{q}  $ are called multiplicative characters of $ \mathbb{F}_{q} $. By (\cite{LN97}, Theorem 5.8), for each $j=0, 1, \ldots, q-2$, the function $ \psi_{j} $ with
$$\psi_{j}(g^{k})=e^{\frac{2\pi\sqrt{-1}jk}{q-1}}\quad \text{for}\  k=0, 1,\ldots, q-2 
$$
defines a multiplicative character of $ \mathbb{F}_{q} $, where $g$ is a generator of $ \mathbb{F}^{*}_{q} $. For $ j=\frac{q-1}{2} $, the character $ \eta=\psi_{\frac{q-1}{2}} $ defined by\\
$$\eta(g^{k})=
\begin{cases}
-1, &\text{if  $ 2\nmid k$};\\
~~1, &\text{if $ 2\mid k$}. 
\end{cases}
$$
is called the quadratic character. Moreover, we extend this quadratic character by letting $ \eta(0)=0 $.
The quadratic Gauss sum $ G=G(\eta, \chi_{1}) $ over $ \mathbb{F}_{q} $ is defined by\
$$ G(\eta, \chi_{1})=\sum_{x\in \mathbb{F}_{q}^{*}}\eta(x)\chi_{1}(x).
$$
Now, let  $ \overline{\eta} $ and $ \overline{\chi}_{1} $ denote the quadratic and canonical character of $ \mathbb{F}_{p} $ respectively. Then we define  the quadratic Gauss sum $ \overline{G}=G(\overline{\eta}, \overline{\chi}_{1}) $ over $ \mathbb{F}_{p} $ by
$$ G(\overline{\eta}, \overline{\chi}_{1})=\sum_{x\in \mathbb{F}_{p}^{*}}\overline{\eta}(x)\overline{\chi}_{1}(x).$$
The explicit values of quadratic Gauss sums are given by the following lemma.
\begin{lemma}\label{p4le1}
	\emph{\cite[Theorem 5.15]{LN97}} Let the symbols have the same meanings as before. Then\\
	
	$G(\eta, \chi_{1})=(-1)^{m-1}\sqrt{-1}^{\frac{(p-1)^{2}m}{4}}\sqrt{p^m},~G(\overline{\eta}, \overline{\chi}_{1})=\sqrt{-1}^{\frac{(p-1)^{2}}{4}}\sqrt{p}.$ 
\end{lemma}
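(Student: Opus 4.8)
The plan is to reduce both evaluations to the single classical problem of determining the quadratic Gauss sum over the prime field $\mathbb{F}_p$, and then to transport the prime-field answer up to $\mathbb{F}_q$ by a functorial lifting relation between Gauss sums. So I would prove the formula for $\overline{G}=G(\overline{\eta},\overline{\chi}_1)$ first and deduce the formula for $G(\eta,\chi_1)$ from it.

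For the prime-field sum, I would first rewrite $\overline{G}=\sum_{x\in\mathbb{F}_p^{*}}\overline{\eta}(x)\zeta_p^{x}=\sum_{x=0}^{p-1}\zeta_p^{x^{2}}$, using that the number of square roots of $t$ equals $1+\overline{\eta}(t)$ and that $\sum_{t}\zeta_p^{t}=0$ by the orthogonality relation quoted above. The modulus is then easy: pairing $\overline{G}$ with its complex conjugate and applying orthogonality once more gives $|\overline{G}|^{2}=p$, while the analogous manipulation of $\overline{G}\cdot\overline{G}$ produces $\overline{G}^{2}=\overline{\eta}(-1)\,p=(-1)^{(p-1)/2}p$. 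Hence $\overline{G}$ is forced into the set $\{\pm\sqrt{p},\pm i\sqrt{p}\}$, with the pair of admissible values already selected by the residue of $p$ modulo $4$.

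The main obstacle is fixing the sign, i.e. proving $\overline{G}=+\sqrt p$ for $p\equiv1\pmod4$ and $\overline{G}=+i\sqrt p$ for $p\equiv3\pmod4$; equivalently $\overline{G}=\sqrt{-1}^{\,(p-1)^{2}/4}\sqrt p$, since $\bigl((p-1)/2\bigr)^{2}\equiv0$ or $1\pmod4$ according as $p\equiv1$ or $3\pmod4$. The squaring identity cannot separate a value from its negative, so here I would invoke one of the classical sign determinations --- for example Schur's computation of the eigenvalues of the $p\times p$ matrix $\bigl(\zeta_p^{jk}\bigr)_{0\le j,k\le p-1}$, whose trace is exactly $\overline{G}$ and whose characteristic polynomial is explicitly known, or alternatively the analytic theta-function argument of Dirichlet. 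Either route pins down the phase unambiguously and completes the $\mathbb{F}_p$ case.

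Finally I would lift to $\mathbb{F}_q$ through the Davenport--Hasse relation. One first checks the routine point that $\eta=\overline{\eta}\circ N$ factors through the norm $N\colon\mathbb{F}_q^{*}\to\mathbb{F}_p^{*}$ and that $\chi_1=\overline{\chi}_1\circ\mathrm{Tr}$ factors through the trace; both follow directly from the definitions of $\eta$ and $\chi_1$. The lifting theorem then reads $-G(\eta,\chi_1)=(-\overline{G})^{m}$, so that $G(\eta,\chi_1)=(-1)^{m-1}\overline{G}^{\,m}$. Substituting $\overline{G}=\sqrt{-1}^{\,(p-1)^{2}/4}\sqrt p$ gives $G(\eta,\chi_1)=(-1)^{m-1}\sqrt{-1}^{\,(p-1)^{2}m/4}\sqrt{p^{m}}$, which is exactly the asserted formula.
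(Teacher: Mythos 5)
The paper offers no proof of this lemma at all --- it is quoted verbatim as \cite[Theorem 5.15]{LN97} --- and your outline correctly reconstructs the standard argument of that cited source: the prime-field evaluation via $|\overline{G}|^{2}=p$ and $\overline{G}^{2}=\overline{\eta}(-1)p$, the sign fixed by the classical Gauss/Schur/Dirichlet determination, and then the Davenport--Hasse lift $G(\eta,\chi_{1})=(-1)^{m-1}\overline{G}^{\,m}$, justified by the compatibilities $\eta=\overline{\eta}\circ N$ and $\chi_{1}=\overline{\chi}_{1}\circ\mathrm{Tr}$. Every step checks, including the translation into the stated exponent form (since $(p-1)^{2}/4$ is $\equiv 0$ or $1 \pmod 4$ according as $p\equiv 1$ or $3 \pmod 4$, one recovers $(-1)^{m-1}\sqrt{q}$ and $(-1)^{m-1}i^{m}\sqrt{q}$ respectively), so apart from your explicit and legitimate appeal to a classical sign-determination theorem, the proposal is complete and takes essentially the same route as the proof in the cited reference.
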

\begin{lemma}\label{p4le2}
	\emph{\cite[Lemma 7]{DD15}} Let $\eta$ and $\overline{\eta}$ be the  quadratic characters of $\mathbb{F}_{q}^{*}$ and $\mathbb{F}_{p}^{*}$, respectively. Then the following assertions hold:\\
	$1$. If $m\geq2$ is even, then $ \eta(y)=1 $ for each $y\in \mathbb{F}^{*}_{p}  $.\\
	$2$. If $m$ is odd, then $ \eta(y)=\overline{\eta}(y) $ for each $y\in \mathbb{F}^{*}_{p}  $. 
\end{lemma}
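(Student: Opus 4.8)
The plan is to reduce both assertions to Euler's criterion followed by a parity count on an exponent. First I would record the explicit description of the quadratic character: for every $x\in\mathbb{F}_q^{*}$ one has $\eta(x)=x^{(q-1)/2}$, where the right-hand side takes values in $\{1,-1\}\subset\mathbb{F}_q$ and is identified with the corresponding complex sign. This is justified by writing $x=g^{k}$ for a generator $g$ of $\mathbb{F}_q^{*}$: then $x^{(q-1)/2}=g^{k(q-1)/2}$ equals $1$ in $\mathbb{F}_q$ precisely when $k$ is even, which is exactly the condition $\eta(g^{k})=1$ from the definition of $\eta$. The same reasoning over $\mathbb{F}_p$ gives $\overline{\eta}(y)=y^{(p-1)/2}$ for $y\in\mathbb{F}_p^{*}$.

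The key algebraic observation is the factorization
\[
\frac{q-1}{2}=\frac{p^{m}-1}{2}=\frac{p-1}{2}\bigl(1+p+p^{2}+\cdots+p^{m-1}\bigr),
\]
valid because $p-1$ divides $p^{m}-1$ with quotient $1+p+\cdots+p^{m-1}$, and $(p-1)/2$ is an integer since $p$ is odd. Fixing $y\in\mathbb{F}_p^{*}$ and abbreviating $\varepsilon:=y^{(p-1)/2}=\overline{\eta}(y)\in\{1,-1\}$, I would then compute
\[
\eta(y)=y^{(q-1)/2}=\bigl(y^{(p-1)/2}\bigr)^{1+p+\cdots+p^{m-1}}=\varepsilon^{\,S},\qquad S:=\sum_{j=0}^{m-1}p^{j}.
\]

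Because $\varepsilon=\pm1$, the quantity $\varepsilon^{S}$ depends only on the parity of $S$; and since $p$ is odd, each summand $p^{j}$ is odd, so $S\equiv m\pmod 2$ and hence $\varepsilon^{S}=\varepsilon^{m}$. Both assertions now drop out immediately. If $m$ is even then $\varepsilon^{m}=1$, so $\eta(y)=1$ for every $y\in\mathbb{F}_p^{*}$, which is statement~$1$. If $m$ is odd then $\varepsilon^{m}=\varepsilon=\overline{\eta}(y)$, giving $\eta(y)=\overline{\eta}(y)$, which is statement~$2$. There is no serious obstacle beyond spotting the correct factorization of the exponent; the only point demanding a little care is the passage from the complex-valued character to the $\mathbb{F}_q$-valued power $x^{(q-1)/2}$, which must be set up via Euler's criterion so that manipulating the exponents is legitimate.
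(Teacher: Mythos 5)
Your argument is correct: Euler's criterion gives $\eta(y)=y^{(q-1)/2}$, the factorization $\frac{q-1}{2}=\frac{p-1}{2}(1+p+\cdots+p^{m-1})$ is valid, and the parity observation $1+p+\cdots+p^{m-1}\equiv m\pmod 2$ cleanly yields $\eta(y)=\overline{\eta}(y)^{m}$, from which both assertions follow. The paper itself supplies no proof --- it quotes the lemma from \cite[Lemma 7]{DD15} --- and your reasoning is essentially the standard argument given in that reference (equivalently phrased there via the norm map, since $N_{\mathbb{F}_q/\mathbb{F}_p}(y)=y^{1+p+\cdots+p^{m-1}}=y^{m}$ for $y\in\mathbb{F}_p^{*}$), so there is nothing to add.
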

\begin{lemma}{\em\cite{S67}}\label{p4le3}
Let the notations have the same significations as before. Then, for $N=2$, the cyclotomic numbers are given by:\\
	$1.$ h even: $(0,0)^{(2,q)}=\frac{h-2}{2}$, $(0,1)^{(2,q)}=(1,0)^{(2,q)}=(1,1)^{(2,q)}=\frac{h}{2}$.\\
	$2.$ h odd: $(0,0)^{(2,q)}=(1,0)^{(2,q)}=(1,1)^{(2,q)}=\frac{h-1}{2}$, $(0,1)^{(2,q)}=\frac{h+1}{2}.$
\end{lemma}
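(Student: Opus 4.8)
The plan is to derive all four numbers directly from the quadratic character $\eta$ of $\mathbb{F}_q$, since for $N=2$ we have $q=2h+1$ and the two classes $\mathcal{C}_0^{(2,q)}$, $\mathcal{C}_1^{(2,q)}$ are precisely the nonzero squares and the nonsquares of $\mathbb{F}_q$. On $\mathbb{F}_q^{*}$ the indicator of $\mathcal{C}_0^{(2,q)}$ equals $\tfrac12\bigl(1+\eta(x)\bigr)$ and that of $\mathcal{C}_1^{(2,q)}$ equals $\tfrac12\bigl(1-\eta(x)\bigr)$. Because $(i,j)^{(2,q)}$ counts the $x$ with $x\in\mathcal{C}_i^{(2,q)}$ and $x+1\in\mathcal{C}_j^{(2,q)}$ — which forces $x\notin\{0,-1\}$ — I would write each of the four numbers uniformly as
\[(i,j)^{(2,q)}=\frac14\sum_{x\in\mathbb{F}_q\setminus\{0,-1\}}\bigl(1+(-1)^{i}\eta(x)\bigr)\bigl(1+(-1)^{j}\eta(x+1)\bigr).\]

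Expanding the product reduces the whole computation to three character sums. The constant term contributes $q-2$, the number of admissible $x$. By the orthogonality relation $\sum_{x\in\mathbb{F}_q^{*}}\eta(x)=0$, removing the single excluded point $x=-1$ gives $\sum_{x\neq0,-1}\eta(x)=-\eta(-1)$, and after the shift $x\mapsto x+1$ one obtains $\sum_{x\neq0,-1}\eta(x+1)=\sum_{y\neq0,1}\eta(y)=-1$.

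The one genuinely nontrivial piece — and the step I expect to be the main obstacle — is the cross term $\sum_{x\neq0,-1}\eta(x)\eta(x+1)$, which is essentially a Jacobi sum of order two. I would evaluate it by the factorisation $\eta\bigl(x(x+1)\bigr)=\eta(x^{2})\,\eta(1+x^{-1})=\eta(1+x^{-1})$, valid for $x\neq0$; as $x$ ranges over $\mathbb{F}_q^{*}\setminus\{-1\}$ the quantity $1+x^{-1}$ ranges bijectively over $\mathbb{F}_q\setminus\{0,1\}$, so the cross term collapses to $\sum_{s\neq0,1}\eta(s)=-1$ by orthogonality once more. (The points $x=0,-1$ may be freely reinstated here since $\eta(0)=0$.)

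Assembling the pieces, each cyclotomic number takes the shape $\tfrac14\bigl(q-2-(-1)^{i}\eta(-1)-(-1)^{j}-(-1)^{i+j}\bigr)$, and the only remaining input is $\eta(-1)=(-1)^{(q-1)/2}=(-1)^{h}$, read off from the definition of $\eta$ on a generator. The two cases in the statement are then exactly the parity split of $h$: for $h$ even $\eta(-1)=1$ and for $h$ odd $\eta(-1)=-1$, and substituting $q=2h+1$ yields the claimed values in each case. As a consistency check I would verify the row sums, for instance that $(0,0)^{(2,q)}+(0,1)^{(2,q)}$ equals $h-1$ when $h$ is even and $h$ when $h$ is odd, matching precisely whether $-1$ does or does not lie in $\mathcal{C}_0^{(2,q)}$.
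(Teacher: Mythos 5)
Your proof is correct, and each step checks out. The indicator identities $\tfrac12(1\pm\eta(x))$ for the two classes, the uniform formula
\[
(i,j)^{(2,q)}=\frac14\sum_{x\in\mathbb{F}_q\setminus\{0,-1\}}\bigl(1+(-1)^{i}\eta(x)\bigr)\bigl(1+(-1)^{j}\eta(x+1)\bigr),
\]
the two linear sums $\sum_{x\neq0,-1}\eta(x)=-\eta(-1)$ and $\sum_{x\neq0,-1}\eta(x+1)=-1$, and the Jacobi-sum evaluation via $\eta\bigl(x(x+1)\bigr)=\eta(1+x^{-1})$ with $1+x^{-1}$ running bijectively over $\mathbb{F}_q\setminus\{0,1\}$ are all sound, yielding $(i,j)^{(2,q)}=\tfrac14\bigl(q-2-(-1)^{i}\eta(-1)-(-1)^{j}-(-1)^{i+j}\bigr)$; substituting $\eta(-1)=(-1)^{h}$ and $q=2h+1$ I have verified all eight values against the statement, and your row-sum consistency check ($h-1$ versus $h$ according as $-1$ is or is not a square) is also correct.

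One point of comparison: the paper itself offers no proof to measure yours against — Lemma 2.3 is quoted directly from Storer's book on cyclotomy \cite{S67}. The classical derivation found there proceeds by elementary counting relations among the cyclotomic numbers (row and column sums $\sum_j(i,j)^{(N,q)}$, and the symmetries induced by $x\mapsto -x$ and $x\mapsto x^{-1}$ on the classes, which is essentially the bookkeeping your final consistency check gestures at), rather than by character sums. Your route buys uniformity — a single closed formula covering all four pairs $(i,j)$ and both parities of $h$ simultaneously — at the modest cost of evaluating the order-two Jacobi sum, which you handle correctly and self-containedly. Either argument is complete; yours has the advantage of making the dependence on $\eta(-1)$, and hence the parity split in the statement, structurally transparent.
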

\begin{lemma}\label{p4le4}
	\emph{\cite[Theorem 5.33]{LN97}} Let $ \chi $ be a non-trivial additive character of $ \mathbb{F}_{q} $, and let $f(x)=a_{2}x^{2} + a_{1}x + a_{0}\in \mathbb{F}_{q}[x]$ with $ a_{2}\neq 0 $. Then
	$$\sum_{x\in \mathbb{F}_{q}}\chi(f(x))=\chi(a_{0} - a_{1}^{2}(4a_{2})^{-1})\eta(a_{2})G(\eta, \chi).$$
\end{lemma}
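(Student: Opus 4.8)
The plan is to reduce the sum to a standard quadratic Gauss sum by completing the square. Since $p$ is odd, both $2$ and $4$ are invertible in $\mathbb{F}_q$, so the identity $f(x) = a_2\bigl(x + a_1(2a_2)^{-1}\bigr)^2 + \bigl(a_0 - a_1^2(4a_2)^{-1}\bigr)$ holds for every $x \in \mathbb{F}_q$; this is the only place where the hypothesis $a_2 \neq 0$ and the oddness of $p$ enter in an essential way. Substituting this expression and invoking the defining property $\chi(g_1 + g_2) = \chi(g_1)\chi(g_2)$ of an additive character lets me pull the constant term out as a scalar, giving $\sum_{x}\chi(f(x)) = \chi\bigl(a_0 - a_1^2(4a_2)^{-1}\bigr)\sum_{x}\chi\bigl(a_2(x + a_1(2a_2)^{-1})^2\bigr)$. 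A shift of the summation index $y = x + a_1(2a_2)^{-1}$, which merely permutes $\mathbb{F}_q$, collapses the remaining sum to $S := \sum_{y \in \mathbb{F}_q}\chi(a_2 y^2)$.

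The heart of the argument is the evaluation of $S$, which I would carry out by pushing forward along the squaring map. For a nonzero $w \in \mathbb{F}_q$ the equation $y^2 = w$ has exactly $1 + \eta(w)$ solutions (two when $w$ is a square, none otherwise), while $y^2 = 0$ has the single solution $y = 0$. Grouping the terms of $S$ according to the value $w = y^2$ therefore yields $S = 1 + \sum_{w \in \mathbb{F}_q^{*}}(1 + \eta(w))\chi(a_2 w)$. The plain part $1 + \sum_{w \in \mathbb{F}_q^{*}}\chi(a_2 w)$ vanishes by the orthogonality property of the additive characters, because $a_2 \neq 0$ makes $w \mapsto \chi(a_2 w)$ a nontrivial additive character of $\mathbb{F}_q$, and one is left with $S = \sum_{w \in \mathbb{F}_q^{*}}\eta(w)\chi(a_2 w)$.

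Finally I would recognise this last sum as $\eta(a_2)G(\eta,\chi)$. Writing $t = a_2 w$ (again a permutation of $\mathbb{F}_q^{*}$) and using that $\eta$ is multiplicative with $\eta(a_2^{-1}) = \eta(a_2)$, since $\eta$ takes values in $\{\pm 1\}$, gives $S = \eta(a_2)\sum_{t \in \mathbb{F}_q^{*}}\eta(t)\chi(t) = \eta(a_2)G(\eta,\chi)$. Combining this with the factored form obtained in the first paragraph produces the claimed identity. The only genuinely delicate point is the counting step that introduces the quadratic character $\eta$ through the fibre sizes of the squaring map; every other step is a change of variables together with the orthogonality of characters.
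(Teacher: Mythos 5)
Your proof is correct in every detail. Note that the paper itself gives no proof of this lemma at all --- it is imported verbatim as \cite[Theorem 5.33]{LN97} --- and your argument (completing the square, which is where $a_2\neq 0$ and the oddness of $p$ enter, followed by the evaluation $\sum_{y}\chi(a_2y^2)=\eta(a_2)G(\eta,\chi)$ via the fibre count $\#\{y: y^2=w\}=1+\eta(w)$ and orthogonality) is precisely the standard proof given in Lidl--Niederreiter, so there is nothing to fault and nothing genuinely different to compare.
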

\begin{lemma}\label{p4le5}
	\emph{\cite[Lemma 9]{DD15}}
	For each $c\in\mathbb{F}_{p},$ let
	$n_{c}=\#\{v\in\mathbb{F}_{q}:\mathrm{Tr}(v^{2})=c\}.$ Then
	$$n_{c}=\begin{cases}
		p^{m-1}+p^{-1}(p-1)G,&\text{if } 2\mid m\text{ and } c=0;\\
		p^{m-1}-p^{-1}G,&\text{if } 2\mid m\text{ and } c\neq0;\\
			p^{m-1},&\text{if } 2\nmid m\text{ and } c=0;\\
		p^{m-1}+p^{-1}\overline{\eta}(-c)G\overline{G},&\text{if } 2\nmid m\text{ and } c\neq0.
	\end{cases}$$
\end{lemma}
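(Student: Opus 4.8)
The plan is to express $n_c$ through the orthogonality of the additive characters of $\mathbb{F}_p$ and then reduce everything to quadratic Gauss sums that are already available from Lemmas \ref{p4le2} and \ref{p4le4}. Writing $\zeta_p$ for the fixed $p$-th root of unity, I would first use the indicator identity $\frac{1}{p}\sum_{y\in\mathbb{F}_p}\zeta_p^{y(\mathrm{Tr}(v^2)-c)}$, which equals $1$ exactly when $\mathrm{Tr}(v^2)=c$ and $0$ otherwise, to write
$$n_c=\frac{1}{p}\sum_{v\in\mathbb{F}_q}\sum_{y\in\mathbb{F}_p}\zeta_p^{y(\mathrm{Tr}(v^2)-c)}=\frac{1}{p}\Big(q+\sum_{y\in\mathbb{F}_p^{*}}\zeta_p^{-yc}\sum_{v\in\mathbb{F}_q}\chi_1(yv^2)\Big),$$
where I have isolated the $y=0$ contribution $q$ and used $\mathrm{Tr}(yv^2)=y\,\mathrm{Tr}(v^2)$ for $y\in\mathbb{F}_p$, so that $\zeta_p^{y\mathrm{Tr}(v^2)}=\chi_1(yv^2)$ with $\chi_1$ the canonical additive character of $\mathbb{F}_q$.

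Next I would evaluate the inner sum as a quadratic Gauss sum. Applying Lemma \ref{p4le4} to $f(v)=yv^2$ (so $a_2=y\neq0$ and $a_1=a_0=0$) gives $\sum_{v\in\mathbb{F}_q}\chi_1(yv^2)=\eta(y)G$, whence
$$n_c=\frac{1}{p}\Big(q+G\sum_{y\in\mathbb{F}_p^{*}}\eta(y)\,\zeta_p^{-yc}\Big).$$
From here the behaviour of $\eta$ restricted to the subfield $\mathbb{F}_p^{*}$ controls the entire computation, and Lemma \ref{p4le2} supplies precisely this information: for even $m$ one has $\eta(y)=1$, while for odd $m$ one has $\eta(y)=\overline{\eta}(y)$. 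I would therefore split into the two parity cases.

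For even $m$ the residual sum is $\sum_{y\in\mathbb{F}_p^{*}}\zeta_p^{-yc}$, which equals $p-1$ when $c=0$ and equals $-1$ when $c\neq0$ (the latter because $\sum_{y\in\mathbb{F}_p}\zeta_p^{-yc}=0$); substituting these yields the first two stated values. For odd $m$ the sum becomes $\sum_{y\in\mathbb{F}_p^{*}}\overline{\eta}(y)\zeta_p^{-yc}$, which vanishes when $c=0$ since $\sum_{y\in\mathbb{F}_p^{*}}\overline{\eta}(y)=0$, giving $n_0=p^{m-1}$. The one case demanding genuine care---and where I expect the bookkeeping to be most delicate---is odd $m$ with $c\neq0$: here I would substitute $z=-cy$ and use $\overline{\eta}(y)=\overline{\eta}(-c)\,\overline{\eta}(z)$, valid because $\overline{\eta}$ is multiplicative of order $2$, to recognize
$$\sum_{y\in\mathbb{F}_p^{*}}\overline{\eta}(y)\zeta_p^{-yc}=\overline{\eta}(-c)\sum_{z\in\mathbb{F}_p^{*}}\overline{\eta}(z)\,\overline{\chi}_1(z)=\overline{\eta}(-c)\,\overline{G}.$$
Feeding this back gives $n_c=p^{m-1}+p^{-1}\overline{\eta}(-c)G\overline{G}$, completing the final case. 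The only real subtlety throughout is keeping straight which quadratic character (the one on $\mathbb{F}_q$ versus the one on $\mathbb{F}_p$) occurs at each stage, and Lemma \ref{p4le2} is exactly the tool that resolves this ambiguity.
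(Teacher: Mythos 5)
Your proof is correct: the orthogonality reduction, the evaluation of $\sum_{v\in\mathbb{F}_q}\chi_1(yv^2)=\eta(y)G$ via Lemma \ref{p4le4}, the parity split via Lemma \ref{p4le2}, and the substitution $z=-cy$ producing $\overline{\eta}(-c)\overline{G}$ are all sound and yield exactly the four stated values. The paper itself gives no proof, quoting the lemma from \cite{DD15}, and your argument is precisely the standard character-sum computation used in that source, so there is nothing to add.
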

\begin{lemma}\label{p4le6}
	\emph{\cite[Lemma 10]{DD15}}
	For any $\xi\in\mathbb{F}_{q}^{*}$, let
	$$K_{\xi}=\sum_{d\in\mathbb{F}_{q}}\sum_{x\in\mathbb{F}_{p}^{*}}\zeta_{p}^{x\mathrm{Tr}(d^{2})}\sum_{z\in\mathbb{F}_{p}^{*}}\zeta_{p}^{z\mathrm{Tr}(\xi d)}.$$ Then	
$$K_{\xi}=\begin{cases}
		G(p-1)^{2},&\text{if } 2\mid m \text{ and } \mathrm{Tr}(\xi^{2})=0;\\
		-G(p-1),&\text{if } 2\mid m \text{ and } \mathrm{Tr}(\xi^{2})\neq0;\\
		0,&\text{if } 2\nmid m \text{ and } \mathrm{Tr}(\xi^{2})=0;\\
		\overline{\eta}(-\mathrm{Tr}(\xi^{2}))(p-1)G\overline{G},&\text{if } 2\nmid m \text{ and } \mathrm{Tr}(\xi^{2})\neq0.
	\end{cases}$$
		
\end{lemma}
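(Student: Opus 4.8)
The plan is to interchange the order of summation so that the innermost sum runs over $\mathbb{F}_{q}$, and then to recognize that inner sum as a quadratic character sum evaluable by Lemma \ref{p4le4}. Since $x\mathrm{Tr}(d^{2})+z\mathrm{Tr}(\xi d)=\mathrm{Tr}(xd^{2}+z\xi d)$ for $x,z\in\mathbb{F}_{p}^{*}$, I would rewrite
$$K_{\xi}=\sum_{x\in\mathbb{F}_{p}^{*}}\sum_{z\in\mathbb{F}_{p}^{*}}\sum_{d\in\mathbb{F}_{q}}\chi_{1}(xd^{2}+z\xi d).$$
Applying Lemma \ref{p4le4} with $f(d)=xd^{2}+z\xi d$ (so $a_{2}=x$, $a_{1}=z\xi$, $a_{0}=0$) evaluates the innermost sum as $\chi_{1}\!\left(-z^{2}\xi^{2}(4x)^{-1}\right)\eta(x)G$. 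Because $z,x\in\mathbb{F}_{p}$, we have $\chi_{1}\!\left(-z^{2}\xi^{2}(4x)^{-1}\right)=\zeta_{p}^{-z^{2}(4x)^{-1}\mathrm{Tr}(\xi^{2})}$, so writing $T:=\mathrm{Tr}(\xi^{2})$ gives
$$K_{\xi}=G\sum_{x\in\mathbb{F}_{p}^{*}}\eta(x)\sum_{z\in\mathbb{F}_{p}^{*}}\zeta_{p}^{-z^{2}(4x)^{-1}T}.$$

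Second, I would split according to whether $T=0$. If $T=0$, the inner $z$-sum is simply $p-1$, leaving $K_{\xi}=G(p-1)\sum_{x\in\mathbb{F}_{p}^{*}}\eta(x)$. Invoking Lemma \ref{p4le2}: when $m$ is even, $\eta(x)=1$ on $\mathbb{F}_{p}^{*}$, so the sum equals $p-1$ and $K_{\xi}=G(p-1)^{2}$; when $m$ is odd, $\eta(x)=\overline{\eta}(x)$ on $\mathbb{F}_{p}^{*}$, so the sum of the quadratic character vanishes and $K_{\xi}=0$. This settles the first and third cases.

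Third, for $T\neq0$ I would evaluate the inner sum as a quadratic Gauss sum over $\mathbb{F}_{p}$. Setting $w=-(4x)^{-1}T\in\mathbb{F}_{p}^{*}$ and applying Lemma \ref{p4le4} over $\mathbb{F}_{p}$ (with $a_{2}=w$, $a_{1}=a_{0}=0$) yields $\sum_{z\in\mathbb{F}_{p}}\zeta_{p}^{wz^{2}}=\overline{\eta}(w)\overline{G}$; subtracting the $z=0$ term gives $\sum_{z\in\mathbb{F}_{p}^{*}}\zeta_{p}^{wz^{2}}=\overline{\eta}(w)\overline{G}-1$. Using $\overline{\eta}(4)=1$ and $\overline{\eta}(x)^{-1}=\overline{\eta}(x)$, one simplifies $\overline{\eta}(w)=\overline{\eta}(-T)\overline{\eta}(x)$, hence
$$K_{\xi}=G\,\overline{\eta}(-T)\overline{G}\sum_{x\in\mathbb{F}_{p}^{*}}\eta(x)\overline{\eta}(x)-G\sum_{x\in\mathbb{F}_{p}^{*}}\eta(x).$$
Applying Lemma \ref{p4le2} once more: for even $m$ the first character sum is $0$ and the second is $p-1$, giving $K_{\xi}=-G(p-1)$; for odd $m$ the first sum is $p-1$ (since $\overline{\eta}(x)^{2}=1$) and the second is $0$, giving $K_{\xi}=\overline{\eta}(-T)(p-1)G\overline{G}$. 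This settles the remaining two cases.

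The main obstacle is the bookkeeping in the final step: one must correctly reduce $\overline{\eta}\!\left(-(4x)^{-1}\mathrm{Tr}(\xi^{2})\right)$ to isolate the $x$-dependence as $\overline{\eta}(x)$, and then pair it against $\eta(x)$ via Lemma \ref{p4le2} so that the parity of $m$ cleanly selects which of the two character sums $\sum_{x}\eta(x)\overline{\eta}(x)$ and $\sum_{x}\eta(x)$ survives. Everything else is a routine application of the two Gauss-sum evaluations, one over $\mathbb{F}_{q}$ and one over $\mathbb{F}_{p}$.
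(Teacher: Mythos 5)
Your proof is correct in every step: the interchange of summation, the evaluation of the inner $d$-sum by completing the square via Lemma \ref{p4le4}, the reduction $\overline{\eta}\bigl(-(4x)^{-1}\mathrm{Tr}(\xi^{2})\bigr)=\overline{\eta}(-\mathrm{Tr}(\xi^{2}))\,\overline{\eta}(x)$, and the case split on the parity of $m$ through Lemma \ref{p4le2} all check out and reproduce the four stated values. Note that the paper states this lemma without proof, citing \cite[Lemma 10]{DD15}; your argument is precisely the standard Gauss-sum computation used in that source and mirrored in this paper's own proofs of the analogous Lemmas \ref{p4le10} and \ref{p4le16}, so nothing further is needed.
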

\begin{lemma}\label{p4le7}
	\emph{\cite[Lemma 11]{DD15}}
	Let $\xi\in\mathbb{F}_{q}^{*}$ and 
	$N_{\xi}=\#\{d\in\mathbb{F}_{q}:\mathrm{Tr}(d^{2})=0 \text{ and }\mathrm{Tr}(\xi d)=0\}.$ Then
	$$N_{\xi}=\begin{cases}
		p^{m-2}+p^{-1}(p-1)G,&\text{if }2\mid m \text{ and }\mathrm{Tr}(\xi^{2})=0;\\
			p^{m-2},&\text{if }2\mid m \text{ and }\mathrm{Tr}(\xi^{2})\neq0;\\
		p^{m-2},&\text{if }2\nmid m \text{ and }\mathrm{Tr}(\xi^{2})=0;\\
	p^{m-2}+p^{-2}\overline{\eta}(-\mathrm{Tr}(\xi^{2}))(p-1)G\overline{G},&\text{if }2\nmid m \text{ and }\mathrm{Tr}(\xi^{2})\neq0.\\

	\end{cases}$$
\end{lemma}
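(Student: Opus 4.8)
The plan is to detect the two trace conditions with additive characters and then recognise the genuinely bivariate part of the resulting sum as the quantity $K_\xi$ already evaluated in Lemma \ref{p4le6}. Using the orthogonality of additive characters of $\mathbb{F}_p$ (the displayed orthogonality relation in Section \ref{sec2}), the indicator of $\mathrm{Tr}(a)=0$ is $p^{-1}\sum_{x\in\mathbb{F}_p}\zeta_p^{x\mathrm{Tr}(a)}$, so I would first write
\[
N_\xi=\frac{1}{p^2}\sum_{d\in\mathbb{F}_q}\sum_{x\in\mathbb{F}_p}\sum_{y\in\mathbb{F}_p}\zeta_p^{x\mathrm{Tr}(d^2)+y\mathrm{Tr}(\xi d)}
\]
and split the outer summation according to the four cases determined by whether $x$ and $y$ vanish.

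Two of the four pieces are immediate. The term $x=y=0$ contributes $p^{-2}q=p^{m-2}$. The term $x=0$, $y\neq0$ contributes $p^{-2}\sum_{y\in\mathbb{F}_p^*}\sum_{d\in\mathbb{F}_q}\zeta_p^{y\mathrm{Tr}(\xi d)}$, which vanishes because $\xi\neq0$ forces $d\mapsto\zeta_p^{y\mathrm{Tr}(\xi d)}$ to be a nontrivial additive character of $\mathbb{F}_q$, hence to sum to zero. For the term $x\neq0$, $y=0$ I would apply Lemma \ref{p4le4} with $a_2=x$, $a_1=a_0=0$ to get $\sum_{d\in\mathbb{F}_q}\zeta_p^{x\mathrm{Tr}(d^2)}=\eta(x)G$, so this piece equals $p^{-2}G\sum_{x\in\mathbb{F}_p^*}\eta(x)$; by Lemma \ref{p4le2} the sum $\sum_{x\in\mathbb{F}_p^*}\eta(x)$ equals $p-1$ when $m$ is even and $0$ when $m$ is odd.

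The final piece, $x\neq0$ and $y\neq0$, is precisely $p^{-2}K_\xi$ after relabelling $y$ as $z$, so I can read off its value from each of the four cases of Lemma \ref{p4le6}. Assembling the pieces, for even $m$ I obtain $p^2N_\xi=p^m+(p-1)G+K_\xi$, and substituting $K_\xi=G(p-1)^2$ or $K_\xi=-G(p-1)$ collapses this to $p^{m-2}+p^{-1}(p-1)G$ when $\mathrm{Tr}(\xi^2)=0$ and to $p^{m-2}$ when $\mathrm{Tr}(\xi^2)\neq0$. For odd $m$ the $x\neq0$, $y=0$ piece drops out, leaving $p^2N_\xi=p^m+K_\xi$, which gives $p^{m-2}$ when $\mathrm{Tr}(\xi^2)=0$ and $p^{m-2}+p^{-2}\overline{\eta}(-\mathrm{Tr}(\xi^2))(p-1)G\overline{G}$ when $\mathrm{Tr}(\xi^2)\neq0$, matching the four stated values.

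I do not expect a deep obstacle here once Lemma \ref{p4le6} is available: it already absorbs the only genuinely two-variable character sum, and that is where the quadratic Gauss sums and the even/odd dichotomy really enter. The remaining care is purely organisational — correctly isolating the three degenerate $(x,y)$-ranges, keeping the factor $p^{-2}$ consistent throughout, and invoking Lemma \ref{p4le2} for $\sum_{x\in\mathbb{F}_p^*}\eta(x)$ at exactly the right moment so that the parity split of $m$ is handled uniformly.
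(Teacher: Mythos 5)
Your proof is correct: the orthogonality decomposition over the four $(x,y)$-ranges, the vanishing of the $x=0$, $y\neq0$ piece since $\xi\neq0$, the evaluation of the $x\neq0$, $y=0$ piece via Lemmas \ref{p4le4} and \ref{p4le2} (giving $(p-1)G$ for even $m$ and $0$ for odd $m$), and the identification of the mixed piece with $K_{\xi}$ from Lemma \ref{p4le6} all check out, and the assembled values match the statement in each of the four cases. This is essentially the same approach as the source: the paper cites this lemma from \cite{DD15} without reproving it, but your argument is exactly the character-sum decomposition the paper itself employs for its analogous counts (compare the proof of Lemma \ref{p4le17}), so nothing further is needed.
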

\section{Primary results needed to establish main results}\label{sec3}
In this section, we finish ground work to establish our main results. 
\begin{lemma}\label{p4le8}
	For $c_{1}, c_{2}\in\mathbb{F}_{p}$, let
	$$N(c_{1},c_{2})=\#\{(a, b)\in\mathbb{F}_{q}\times\mathbb{F}_{q} : \mathrm{Tr}(a^{2})=c_{1}~and~\mathrm{Tr}(b^{2})=c_{2}\}.$$
	Then, for odd $m$, we have
	$$N(c_{1}, c_{2})=\begin{cases}
	p^{2m-2},&\text{if $c_{1}=0$ and $c_{2}=0$};\\
	p^{2m-2}+p^{m-2}\overline{\eta}(-c_{1})G\overline{G},&\text{if $c_{1}\neq0$ and $c_{2}=0$};\\
	p^{2m-2}+p^{m-2}\overline{\eta}(-c_{2})G\overline{G},&\text{if $c_{1}=0$ and $c_{2}\neq0$};\\
	p^{2m-2}+p^{m-2}\big(\overline{\eta}(-c_{1})+\overline{\eta}(-c_{2})\big)G\overline{G}+p^{-2}\overline{\eta}(c_{1}c_{2})G^{2}\overline{G}^{2},&\text{if $c_{1}\neq0$ and $c_{2}\neq0$};
	\end{cases}$$
	and, for even $m$, we have
	$$N(c_{1}, c_{2})=\begin{cases}
	(p^{m-1}+p^{-1}(p-1)G)^{2},&\text{if $c_{1}=0$ and $c_{2}=0$};\\
	p^{2m-2}+p^{m-2}(p-2)G-p^{-2}(p-1)G^{2},&\text{if $c_{1}\neq0$ and $c_{2}=0$};\\
	p^{2m-2}+p^{m-2}(p-2)G-p^{-2}(p-1)G^{2},&\text{if $c_{1}=0$ and $c_{2}\neq0$};\\
	(p^{m-1}-p^{-1}G)^{2},&\text{if $c_{1}\neq0$ and $c_{2}\neq0$}.
	\end{cases}~~~~~~~~~~~~~~~~~~~$$
\end{lemma}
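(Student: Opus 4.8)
The plan is to exploit the fact that the two defining conditions are completely independent: the requirement $\mathrm{Tr}(a^{2})=c_{1}$ constrains only $a$, and $\mathrm{Tr}(b^{2})=c_{2}$ constrains only $b$. Consequently the counting problem factors as a product of two one-variable counts, namely
$$N(c_{1},c_{2})=\#\{a\in\mathbb{F}_{q}:\mathrm{Tr}(a^{2})=c_{1}\}\cdot\#\{b\in\mathbb{F}_{q}:\mathrm{Tr}(b^{2})=c_{2}\}=n_{c_{1}}\cdot n_{c_{2}},$$
where $n_{c}$ is exactly the quantity evaluated in Lemma \ref{p4le5}. This reduction is the only conceptual step; everything that follows is substitution and expansion.

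With this factorization in hand, I would split into the two parities of $m$ and, within each, the four sign-patterns of $(c_{1},c_{2})$. For odd $m$, Lemma \ref{p4le5} gives $n_{0}=p^{m-1}$ and $n_{c}=p^{m-1}+p^{-1}\overline{\eta}(-c)G\overline{G}$ for $c\neq0$; multiplying the appropriate pair and expanding yields each of the four stated expressions directly. For even $m$, the relevant values are $n_{0}=p^{m-1}+p^{-1}(p-1)G$ and $n_{c}=p^{m-1}-p^{-1}G$ for $c\neq0$, and again the products reproduce the claimed formulas. The two ``pure'' cases (both coordinates zero, or both nonzero) are then immediate as $n_{0}^{2}$ and $n_{c}^{2}$ respectively.

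The only two places requiring a line of simplification rather than a bare product are as follows. In the odd-$m$ case with $c_{1},c_{2}\neq0$, expanding $n_{c_{1}}n_{c_{2}}$ produces a cross term $p^{-2}\overline{\eta}(-c_{1})\overline{\eta}(-c_{2})G^{2}\overline{G}^{2}$, and I would invoke the multiplicativity of the quadratic character $\overline{\eta}$ to rewrite $\overline{\eta}(-c_{1})\overline{\eta}(-c_{2})=\overline{\eta}(c_{1}c_{2})$, matching the stated form. In the even-$m$ mixed case ($c_{1}\neq0,\,c_{2}=0$), expanding $(p^{m-1}-p^{-1}G)(p^{m-1}+p^{-1}(p-1)G)$ gives a linear-in-$G$ coefficient $p^{m-2}\big((p-1)-1\big)=p^{m-2}(p-2)$, which accounts for the $(p-2)$ appearing there, together with the quadratic term $-p^{-2}(p-1)G^{2}$.

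There is no genuine obstacle in this lemma: once the independence of the two conditions is recognised, the result follows from Lemma \ref{p4le5} by routine arithmetic. The only point demanding care is bookkeeping in the eight-fold case analysis, in particular tracking the powers of $p$ when multiplying $p^{m-1}$ against $p^{-1}$ (producing $p^{m-2}$) and collecting the cross terms correctly so that the symmetric cases $(c_{1}\neq0,c_{2}=0)$ and $(c_{1}=0,c_{2}\neq0)$ indeed coincide.
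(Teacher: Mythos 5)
Your proof is correct and follows exactly the paper's route: the paper likewise deduces the lemma directly from Lemma \ref{p4le5}, using the independence of the conditions on $a$ and $b$ to write $N(c_{1},c_{2})=n_{c_{1}}n_{c_{2}}$ and then expanding. Your verification of the cross terms (including $\overline{\eta}(-c_{1})\overline{\eta}(-c_{2})=\overline{\eta}(c_{1}c_{2})$ and the $(p-2)$ coefficient in the even-$m$ mixed case) matches the stated formulas, so nothing further is needed.
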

\begin{proof}
	The proof follows directly from Lemma \ref{p4le5}.
\end{proof}
\begin{lemma}\label{p4le9}
	 For $i\in\{1,2,3\}$, let $c_{i}\in\mathbb{F}_{p}^{*}$. For $j\in\{1,-1\}$, let $ L_{j} $ denote the number of triplets $ (c_{1},c_{2},c_{3}) $ such that $ \overline{\eta}(c_{3}^{2}-c_{1}c_{2})=j$. Then
	$$L_{j}=\begin{cases}
	\frac{1}{2}(p-1)^{2}(p-3),& \text{if } j=1;\\
	\frac{1}{2}(p-1)^{3},&\text{if } j=-1.
	\end{cases}$$
\end{lemma}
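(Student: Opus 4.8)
The plan is to count triples $(c_1,c_2,c_3)\in(\mathbb{F}_p^*)^3$ according to the value of the quadratic character $\overline{\eta}(c_3^2-c_1c_2)$, and to do this by stratifying over the value $t:=c_3^2-c_1c_2$. Since $\overline{\eta}(t)=1$ when $t$ is a nonzero square, $\overline{\eta}(t)=-1$ when $t$ is a nonsquare, and $\overline{\eta}(0)=0$, I first observe that $L_1+L_{-1}$ equals the number of triples with $t\neq 0$, while $L_1-L_{-1}=\sum \overline{\eta}(c_3^2-c_1c_2)$ over all admissible triples. The total number of triples is $(p-1)^3$, so if I set $Z:=\#\{(c_1,c_2,c_3)\in(\mathbb{F}_p^*)^3:c_3^2=c_1c_2\}$ and $S:=\sum_{c_1,c_2,c_3\in\mathbb{F}_p^*}\overline{\eta}(c_3^2-c_1c_2)$, then $L_1+L_{-1}=(p-1)^3-Z$ and $L_1-L_{-1}=S$, which solves for $L_1$ and $L_{-1}$ once $Z$ and $S$ are known.

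First I would compute $Z$. For each fixed $c_3\in\mathbb{F}_p^*$, the square $c_3^2$ is a fixed nonzero element, and I need the number of pairs $(c_1,c_2)\in(\mathbb{F}_p^*)^2$ with $c_1c_2=c_3^2$; since $c_2$ is determined by $c_1$ via $c_2=c_3^2 c_1^{-1}$ and this is automatically nonzero, there are exactly $p-1$ such pairs. Hence $Z=(p-1)\cdot(p-1)=(p-1)^2$, giving $L_1+L_{-1}=(p-1)^3-(p-1)^2=(p-1)^2(p-2)$. Next I would compute the character sum $S$. Fixing $c_3$ and summing $\overline{\eta}(c_3^2-c_1c_2)$ over $c_1,c_2\in\mathbb{F}_p^*$ is the main computational step: the cleanest route is to use the standard evaluation of $\sum_{c_2\in\mathbb{F}_p^*}\overline{\eta}(a-c_2)$ for fixed $a=c_3^2-\text{(linear in }c_1)$, or equivalently to substitute $w=c_1c_2$ (noting that for fixed $c_1\in\mathbb{F}_p^*$ the map $c_2\mapsto c_1c_2$ is a bijection of $\mathbb{F}_p^*$) so that the inner double sum collapses to $(p-1)\sum_{w\in\mathbb{F}_p^*}\overline{\eta}(c_3^2-w)$; the remaining sum over $w$ is a one-variable quadratic-character sum that is evaluated using the orthogonality-type identity $\sum_{w\in\mathbb{F}_p}\overline{\eta}(c_3^2-w)=0$, which isolates the $w=0$ and $w=c_3^2$ contributions.

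The main obstacle I anticipate is bookkeeping the boundary terms in this character sum correctly: after the substitution, the sum runs over $w\in\mathbb{F}_p^*$ rather than all of $\mathbb{F}_p$, so I must carefully add back (or subtract) the excluded term $w=0$, where $\overline{\eta}(c_3^2)=1$, and also account for the single $w=c_3^2$ where the summand vanishes. Getting the exact value of $\sum_{w\in\mathbb{F}_p^*}\overline{\eta}(c_3^2-w)$ right — it should come out to a small constant independent of $c_3$ — then summing over the $p-1$ values of $c_3$ determines $S$, and hence $L_1-L_{-1}$. Solving the two linear equations $L_1+L_{-1}=(p-1)^2(p-2)$ and $L_1-L_{-1}=S$ should yield $L_1=\tfrac12(p-1)^2(p-3)$ and $L_{-1}=\tfrac12(p-1)^3$ as claimed; as a sanity check, these sum to $(p-1)^2(p-2)$, which matches $L_1+L_{-1}$, and their difference forces $S=-(p-1)^2$.
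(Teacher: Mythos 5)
Your proposal is correct, and it takes a genuinely different route from the paper. The paper fixes $s=c_{1}$ and counts, via the cyclotomic numbers of order $2$ (Lemma \ref{p4le3}), the pairs $(c_{3}^{2},c_{2})$ with $c_{3}^{2}-sc_{2}=c_{0}$ for each $c_{0}$ of prescribed quadratic character, picking up the factor $2$ from the two square roots $\pm c_{3}$ and then multiplying by the $(p-1)$ choices of $s$ and the $\frac{p-1}{2}$ choices of $c_{0}$; this yields $L_{1}=(p-1)^{2}(h-1)$, $L_{-1}=(p-1)^{2}h$ with $p=2h+1$. You instead solve the linear system $L_{1}+L_{-1}=(p-1)^{3}-Z$, $L_{1}-L_{-1}=S$, where your counts are right: $Z=(p-1)^{2}$ since $c_{2}=c_{3}^{2}c_{1}^{-1}$ is forced and automatically nonzero, and the substitution $w=c_{1}c_{2}$ collapses the double sum so that
$$S=(p-1)\sum_{c_{3}\in\mathbb{F}_{p}^{*}}\sum_{w\in\mathbb{F}_{p}^{*}}\overline{\eta}(c_{3}^{2}-w)=(p-1)\sum_{c_{3}\in\mathbb{F}_{p}^{*}}\Bigl(0-\overline{\eta}(c_{3}^{2})\Bigr)=-(p-1)^{2},$$
using only $\sum_{w\in\mathbb{F}_{p}}\overline{\eta}(c_{3}^{2}-w)=\sum_{u\in\mathbb{F}_{p}}\overline{\eta}(u)=0$; note that the $w=c_{3}^{2}$ term needs no separate correction since $\overline{\eta}(0)=0$ contributes nothing to either side, so the only boundary term is $w=0$, worth $\overline{\eta}(c_{3}^{2})=1$. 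Solving then gives exactly $L_{1}=\frac{1}{2}(p-1)^{2}(p-3)$ and $L_{-1}=\frac{1}{2}(p-1)^{3}$. What your approach buys is self-containedness: it bypasses the cyclotomic-number table imported from Storer (and incidentally the paper's proof miscites Lemma \ref{p4le2} where it means Lemma \ref{p4le3}), replacing it with bare orthogonality of the quadratic character, and the identity $L_{1}+L_{-1}=(p-1)^{2}(p-2)$ serves as a built-in consistency check; what the paper's route buys is a template that generalizes to counts governed by cyclotomic classes of higher order $N$, where a single character-orthogonality identity no longer suffices.
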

\begin{proof} For each pair $(c_{3},c_{2})\in\mathbb{F}_{p}^{*}\times\mathbb{F}_{p}^{*}$ and $s\in\mathbb{F}_{p}^{*}$ fixed, define a mapping $\mathcal{L}$ from $\mathbb{F}_{p}^{*}\times\mathbb{F}_{p}^{*}$ into $\mathbb{F}_{p}$ by
	$\mathcal{L}(c_{3},c_{2})=c_{3}^{2}-sc_{2}.$
	For each $c_{0}\in\mathbb{F}_{p}^{*}$, let 
	$$\mathcal{A}_{c_{0}}=\{(c_{3},c_{2})\in\mathbb{F}_{p}^{*}\times\mathbb{F}_{p}^{*}: \mathcal{L}(c_{3},c_{2})=c_{0}\}.$$
	Set $p=2h+1$. Now, for a fixed $c_{0}$
	such that $\overline{\eta}(c_{0})=1$, the number of pairs $(c_{3}^{2},c_{2})$ satisfying  $c_{3}^{2}-sc_{2}=c_{0}$ is equal to	$(0,0)^{(2,p)}+(1,0)^{(2,p)}=h-1~(\text{by Lemma \ref{p4le2})}.$
	Similarly, for a fixed $c_{0}$
	such that $\overline{\eta}(c_{0})=-1$, the number of pairs $(c_{3}^{2},c_{2})$ satisfying $c_{3}^{2}-sc_{2}=c_{0}$  is equal to
	$(0,1)^{(2,p)}+(1,1)^{(2,p)}=h~(\text{from Lemma \ref{p4le2})}.$
	Consequently, we have
	\begin{equation*}
	\#\mathcal{A}_{c_{0}}=\begin{cases}
	2(h-1),&\text{ if } \overline{\eta}(c_{0})=1;\\
	2h,&\text{ if } \overline{\eta}(c_{0})=-1.
	\end{cases}
	\end{equation*}
	Since there are $(p-1)$ choices to fix $s$, and there are $\frac{(p-1)}{2}$ choices for $c_{0}$ such that $\overline{\eta}(c_{0})=1$ or $\overline{\eta}(c_{0})=-1$, we can easily conclude that $L_{1}=(p-1)^{2}(h-1)$ and $L_{-1}=(p-1)^{2}h$. Thus the result is established.\end{proof}
\begin{lemma}\label{p4le10}
	For $c_{1},c_{3}\in\mathbb{F}_{p}$, let
	$$\Psi_{2}=\sum_{a,b\in\mathbb{F}_{q}}\sum_{x\in\mathbb{F}_{p}^{*}}\zeta_{p}^{x\mathrm{Tr}(a^{2})-xc_{1}}\sum_{z\in\mathbb{F}_{p}^{*}}\zeta_{p}^{z\mathrm{Tr}(ab)-zc_{3}}.$$ Then
$$\Psi_{2}=\begin{cases}
		p^{m}(p-1)^{2},&\text{if } c_{1}=0 \text{ and } c_{3}=0;\\
		-p^{m}(p-1),&\text{if } c_{1}=0 \text{ and } c_{3}\neq0;\\
		-p^{m}(p-1),&\text{if } c_{1}\neq0 \text{ and } c_{3}=0;\\
		p^{m},&\text{if } c_{1}\neq0 \text{ and } c_{3}\neq0.\\
	\end{cases}$$
\end{lemma}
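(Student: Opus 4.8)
The plan is to exploit the product structure of $\Psi_{2}$: the $x$-summand depends only on $a$, while the $z$-summand couples $a$ and $b$ linearly. First I would separate the sums and integrate out $b$ before doing anything else, writing
$$\Psi_{2}=\sum_{a\in\mathbb{F}_{q}}\Bigl(\sum_{x\in\mathbb{F}_{p}^{*}}\zeta_{p}^{x(\mathrm{Tr}(a^{2})-c_{1})}\Bigr)\,T(a),\qquad T(a):=\sum_{b\in\mathbb{F}_{q}}\sum_{z\in\mathbb{F}_{p}^{*}}\zeta_{p}^{z(\mathrm{Tr}(ab)-c_{3})}.$$
The whole point is that $T(a)$ collapses dramatically. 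Interchanging the order of summation gives $T(a)=\sum_{z\in\mathbb{F}_{p}^{*}}\zeta_{p}^{-zc_{3}}\sum_{b\in\mathbb{F}_{q}}\chi_{1}(zab)$, where $\chi_{1}$ is the canonical additive character, since $z\mathrm{Tr}(ab)=\mathrm{Tr}(zab)$.

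The key step will be to observe that $T(a)=0$ for every $a\neq 0$. Indeed, for fixed $z\in\mathbb{F}_{p}^{*}$ the inner sum $\sum_{b\in\mathbb{F}_{q}}\chi_{1}(zab)=\sum_{b\in\mathbb{F}_{q}}\chi_{za}(b)$ is the sum of the additive character $\chi_{za}$ over all of $\mathbb{F}_{q}$, and $\chi_{za}$ is nontrivial precisely when $za\neq 0$, i.e. when $a\neq 0$. Hence, by the orthogonality relation for additive characters, this sum vanishes for every $z\in\mathbb{F}_{p}^{*}$, so $T(a)=0$ whenever $a\neq 0$. Consequently only the single term $a=0$ survives, and the entire double sum reduces to evaluating the contribution at the origin.

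It then remains to treat $a=0$, which is elementary. Since $\mathrm{Tr}(0^{2})=0$ and $\mathrm{Tr}(0\cdot b)=0$ for all $b$, the $x$-factor becomes $\sum_{x\in\mathbb{F}_{p}^{*}}\zeta_{p}^{-xc_{1}}$ and $T(0)=q\sum_{z\in\mathbb{F}_{p}^{*}}\zeta_{p}^{-zc_{3}}$. I would finish by invoking the standard identity $\sum_{x\in\mathbb{F}_{p}^{*}}\zeta_{p}^{xt}=p-1$ if $t=0$ and $=-1$ if $t\neq 0$ (obtained from additive orthogonality over $\mathbb{F}_{p}$ after discarding the $x=0$ term), applied with $t=-c_{1}$ and $t=-c_{3}$. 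Multiplying the two resulting factors, each equal to $p-1$ or $-1$ according as $c_{1}$, respectively $c_{3}$, vanishes, and using $q=p^{m}$, yields exactly the four stated cases. The only genuine obstacle is the vanishing of $T(a)$ for $a\neq 0$; once that is in hand the result is immediate, and notably no Gauss-sum evaluation is required here—unlike the companion Lemma \ref{p4le6}—because the exponent $\mathrm{Tr}(ab)$ is linear, not quadratic, in the summation variable $b$.
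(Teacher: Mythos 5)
Your proposal is correct, and it takes a genuinely different — and more elementary — route than the paper's own proof. The paper integrates out $a$ first: it writes $\Psi_2=\sum_{x,z\in\mathbb{F}_p^*}\overline{\chi}_1(-xc_1-zc_3)\sum_{a,b\in\mathbb{F}_q}\chi_1(xa^2+zab)$ and applies the quadratic Gauss-sum evaluation of Lemma \ref{p4le4} twice (first to the sum over $a$, then to the resulting sum over $b$), arriving at $\eta(-1)G^{2}\sum_{x,z\in\mathbb{F}_p^*}\overline{\chi}_1(-xc_1-zc_3)$; the four stated values then require the identity $\eta(-1)G^{2}=p^{m}$ (a consequence of Lemma \ref{p4le1}, equivalently the standard relation $G^{2}=\eta(-1)q$), which the paper invokes only implicitly in passing from its final display to the claimed answer. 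You instead integrate out $b$ first: since $z\mathrm{Tr}(ab)=\mathrm{Tr}(zab)$ and the character $b\mapsto\chi_1(zab)$ is nontrivial whenever $za\neq0$, plain additive orthogonality annihilates every term with $a\neq0$, and the whole sum collapses to the $a=0$ contribution $q\bigl(\sum_{x\in\mathbb{F}_p^*}\zeta_p^{-xc_1}\bigr)\bigl(\sum_{z\in\mathbb{F}_p^*}\zeta_p^{-zc_3}\bigr)$, from which the four cases follow by the elementary evaluation of $\sum_{x\in\mathbb{F}_p^*}\zeta_p^{xt}$. Your route avoids Gauss sums entirely, sidesteps the unstated step $\eta(-1)G^{2}=p^{m}$, and makes the clean power-of-$p$ answer transparent; the paper's route is less economical here but is uniform with the method genuinely needed in the companion Lemmas \ref{p4le6} and \ref{p4le14}, where the exponent is quadratic in the remaining summation variable and the character sum over $b$ (or $d$) does not simply vanish, so the Gauss-sum machinery cannot be dispensed with. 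Both arguments are complete and yield the stated table.
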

\begin{proof} By Lemma \ref{p4le4}, we have
	\begin{align*}
	\Psi_{2}&=\sum_{a,b\in\mathbb{F}_{q}}\sum_{x\in\mathbb{F}_{p}^{*}}\zeta_{p}^{x\mathrm{Tr}(a^{2})-xc_{1}}\sum_{z\in\mathbb{F}_{p}^{*}}\zeta_{p}^{z\mathrm{Tr}(ab)-zc_{3}}\\
	&=\sum_{x,z\in\mathbb{F}_{p}^{*}}\overline{\chi}_{1}(-xc_{1}-zc_{3})\sum_{a,b\in\mathbb{F}_{q}}\chi_{1}(xa^{2}+zab)\\
	&=\sum_{x,z\in\mathbb{F}_{p}^{*}}\overline{\chi}_{1}(-xc_{1}-zc_{3})\sum_{b\in\mathbb{F}_{q}}\chi_{1}(-\frac{z^{2}b^{2}}{4x})\eta(x)G\\
	&=G\sum_{x,z\in\mathbb{F}_{p}^{*}}\overline{\chi}_{1}(-xc_{1}-zc_{3})\eta(x)\sum_{b\in\mathbb{F}_{q}}\chi_{1}(-\frac{z^{2}b^{2}}{4x})\\
	&=G\sum_{x,z\in\mathbb{F}_{p}^{*}}\overline{\chi}_{1}(-xc_{1}-zc_{3})\eta(x)\chi_{1}(0)\eta(-\frac{z^{2}}{4x})G\\
	&=\eta(-1)G^{2}\sum_{x,z\in\mathbb{F}_{p}^{*}}\overline{\chi}_{1}(-xc_{1}-zc_{3})\\
	&=\begin{cases}
	\eta(-1)G^{2}(p-1)^{2},&\text{if } c_{1}=0 \text{ and } c_{3}=0;\\
	-\eta(-1)G^{2}(p-1),&\text{if } c_{1}=0 \text{ and } c_{3}\neq0;\\
	-\eta(-1)G^{2}(p-1),&\text{if } c_{1}\neq0 \text{ and } c_{3}=0;\\
	\eta(-1)G^{2},&\text{if } c_{1}\neq0 \text{ and } c_{3}\neq0;\\
	\end{cases}
	\end{align*}
as required.	
\end{proof}
\begin{lemma}\label{p4le11}
	For $c_{1},c_{2},c_{3}\in\mathbb{F}_{p}^{*}$, let
	$$\delta=\sum_{c\in\mathbb{F}_{p}^{*}}\sum_{x\in\mathbb{F}_{p}^{*}}\overline{\chi}_{1}(x(c-c_{1}))\sum_{y\in\mathbb{F}_{p}^{*}}\overline{\chi}_{1}(-yc_{2})\sum_{z\in\mathbb{F}_{p}^{*}}\overline{\chi}_{1 }(-\frac{z^{2}c}{4y}-zc_{3}).$$ Then
	$$\delta=\begin{cases}
	p+1,&\text{ if } c_{3}^{2}-c_{1}c_{2}=0;\\ 
	p^{2}+p+1,&\text{ if } \overline{\eta}(c_{3}^{2}-c_{1}c_{2})=1;\\
	-p^{2}+p+1,&\text{ if } \overline{\eta}(c_{3}^{2}-c_{1}c_{2})=-1.\\
	\end{cases}$$
\end{lemma}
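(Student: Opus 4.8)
The plan is to evaluate $\delta$ by peeling off the four nested character sums one at a time, from the innermost sum over $z$ outward to the sum over $c$, converting each quadratic or linear sum into a Gauss sum or an orthogonality relation.

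First I would treat the inner sum over $z$. Since it runs over $\mathbb{F}_p^*$ rather than $\mathbb{F}_p$, I would add and subtract the $z=0$ term (which contributes $1$) and then apply Lemma \ref{p4le4}, taken over the field $\mathbb{F}_p$ (so with $\overline{\chi}_1,\overline{\eta},\overline{G}$ in place of $\chi,\eta,G$), to the quadratic $f(z)=-\tfrac{c}{4y}z^2-c_3z$. Completing the square gives
$$\sum_{z\in\mathbb{F}_p}\overline{\chi}_1\!\left(-\tfrac{z^2c}{4y}-zc_3\right)=\overline{\chi}_1\!\left(\tfrac{c_3^2y}{c}\right)\overline{\eta}(-cy)\,\overline{G},$$
so the sum over $z\in\mathbb{F}_p^*$ is this quantity minus $1$.

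Next I would perform the sum over $y$. Writing $T(c):=\sum_{y\in\mathbb{F}_p^*}\overline{\chi}_1(-yc_2)\sum_{z\in\mathbb{F}_p^*}\overline{\chi}_1(-\tfrac{z^2c}{4y}-zc_3)$ and substituting the previous step, the main piece becomes $\overline{G}\,\overline{\eta}(-c)\sum_{y\in\mathbb{F}_p^*}\overline{\eta}(y)\overline{\chi}_1\!\big(y(\tfrac{c_3^2}{c}-c_2)\big)$, a twisted quadratic-character sum equal to $\overline{\eta}(\tfrac{c_3^2}{c}-c_2)\,\overline{G}$ when $\tfrac{c_3^2}{c}-c_2\neq0$ and to $0$ otherwise, while the leftover piece is $-\sum_{y\in\mathbb{F}_p^*}\overline{\chi}_1(-yc_2)=1$. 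Invoking $\overline{G}^2=\overline{\eta}(-1)p$ (immediate from Lemma \ref{p4le1}) together with the multiplicativity of $\overline{\eta}$, the factors $\overline{\eta}(-1),\overline{\eta}(-c),\overline{\eta}(y)$ collapse and I expect $T(c)$ to reduce to the single clean formula $T(c)=p\,\overline{\eta}(c_3^2-c_2c)+1$, valid uniformly for all $c\in\mathbb{F}_p^*$ (the degenerate case $c=c_3^2/c_2$ being absorbed because $\overline{\eta}(0)=0$).

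Finally I would handle the outer two sums. Executing the sum over $x$ by additive orthogonality (it equals $p-1$ when $c=c_1$ and $-1$ otherwise) reduces the whole expression to $\delta=p\,T(c_1)-\sum_{c\in\mathbb{F}_p^*}T(c)$. Substituting the formula for $T(c)$ and using $\sum_{c\in\mathbb{F}_p^*}\overline{\eta}(c_3^2-c_2c)=-1$ — because as $c$ ranges over $\mathbb{F}_p^*$ the argument $c_3^2-c_2c$ runs over $\mathbb{F}_p$ omitting only the nonzero square $c_3^2$, and $\sum_{w\in\mathbb{F}_p}\overline{\eta}(w)=0$ — I obtain $\delta=p^2\,\overline{\eta}(c_3^2-c_1c_2)+p+1$. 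Reading off the three cases $c_3^2-c_1c_2=0$ and $\overline{\eta}(c_3^2-c_1c_2)=\pm1$ then yields the stated values. I expect the main difficulty to be bookkeeping rather than conceptual: correctly accounting for the boundary terms created when passing between sums over $\mathbb{F}_p^*$ and sums over $\mathbb{F}_p$, and checking that the accumulated $\overline{\eta}$-factors cancel so that $T(c)$ really does collapse to a single term uniformly in $c$.
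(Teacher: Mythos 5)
Your proposal is correct and follows essentially the same route as the paper: both complete the square in the $z$-sum via Lemma \ref{p4le4}, peel off the boundary terms between sums over $\mathbb{F}_p^*$ and $\mathbb{F}_p$, and finish with additive orthogonality in $x$ and the vanishing of $\sum_{w\in\mathbb{F}_p}\overline{\eta}(w)$. Your only (welcome) refinement is evaluating the twisted $y$-sum first to get the uniform formula $T(c)=p\,\overline{\eta}(c_3^2-c_2c)+1$ and hence the closed form $\delta=p^2\overline{\eta}(c_3^2-c_1c_2)+p+1$, which makes explicit the case analysis the paper compresses into ``a simple calculation.''
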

\begin{proof} By Lemma \ref{p4le4}, we have
	{\small\begin{align*}
		\delta&=\sum_{c\in\mathbb{F}_{p}^{*}}\sum_{x\in\mathbb{F}_{p}^{*}}\overline{\chi}_{1}(x(c-c_{1}))\sum_{y\in\mathbb{F}_{p}^{*}}\overline{\chi}_{1}(-yc_{2})\sum_{z\in\mathbb{F}_{p}^{*}}\overline{\chi}_{1 }(-\frac{z^{2}c}{4y}-zc_{3})\\
		&= \sum_{c\in\mathbb{F}_{p}^{*}}\sum_{x\in\mathbb{F}_{p}^{*}}\overline{\chi}_{1}(x(c-c_{1}))\sum_{y\in\mathbb{F}_{p}^{*}}\overline{\chi}_{1}(-yc_{2})\bigg(\overline{\chi}_{1}\big(\frac{yc_{3}^{2}}{c}\big)\overline{\eta}\big(-\frac{c}{4y})\overline{G}-1\bigg)\\
		&=\sum_{c\in\mathbb{F}_{p}^{*}}\sum_{x\in\mathbb{F}_{p}^{*}}\overline{\chi}_{1}(x(c-c_{1}))\sum_{y\in\mathbb{F}_{p}^{*}}\overline{\chi}_{1}(-yc_{2})\overline{\chi}_{1}\big(\frac{yc_{3}^{2}}{c}\big)\overline{\eta}\big(-\frac{c}{4y})\overline{G}-\sum_{c\in\mathbb{F}_{p}^{*}}\sum_{x\in\mathbb{F}_{p}^{*}}\overline{\chi}_{1}(x(c-c_{1}))\sum_{y\in\mathbb{F}_{p}^{*}}\overline{\chi}_{1}(-yc_{2})\\
		&=\sum_{c\in\mathbb{F}_{p}^{*}}\sum_{x\in\mathbb{F}_{p}^{*}}\overline{\chi}_{1}(x(c-c_{1}))\sum_{y\in\mathbb{F}_{p}^{*}}\overline{\chi}_{1}\big(\frac{y(c_{3}^{2}-cc_{2})}{c}\big)\overline{\eta}\big(-\frac{c}{4y})\overline{G}+\sum_{c\in\mathbb{F}_{p}^{*}}\sum_{x\in\mathbb{F}_{p}^{*}}\overline{\chi}_{1}(x(c-c_{1}))\\
		&=\sum_{c\in\mathbb{F}_{p}^{*}}\sum_{x\in\mathbb{F}_{p}^{*}}\overline{\chi}_{1}(x(c-c_{1}))\sum_{y\in\mathbb{F}_{p}^{*}}\overline{\chi}_{1}\big(\frac{y(c_{3}^{2}-cc_{2})}{c}\big)\overline{\eta}\big(-\frac{c}{4y})\overline{G}+\sum_{x\in\mathbb{F}_{p}^{*}}\overline{\chi}_{1}(-xc_{1})\sum_{c\in\mathbb{F}_{p}^{*}}\overline{\chi}_{1}(xc)\\
		&=\overline{G}(p-1)\sum_{y\in\mathbb{F}_{p}^{*}}\overline{\chi}_{1}\big(\frac{y(c_{3}^{2}-c_{1}c_{2})}{c_{1}}\big)\overline{\eta}\big(-\frac{c_{1}}{4y})-\overline{G}\sum_{c\in\mathbb{F}_{p}^{*}\setminus\{c_{1}\}}\sum_{y\in\mathbb{F}_{p}^{*}}\overline{\chi}_{1}\big(\frac{y(c_{3}^{2}-cc_{2})}{c}\big)\overline{\eta}\big(-\frac{c}{4y})+1.
		\end{align*}}
	A simple calculation leads us to the following values of $\delta$:
	$$\delta=\begin{cases}
	p+1,&\text{ if } c_{3}^{2}-c_{1}c_{2}=0;\\ 
	p^{2}+p+1,&\text{ if } \overline{\eta}(c_{3}^{2}-c_{1}c_{2})=1;\\
	-p^{2}+p+1,&\text{ if } \overline{\eta}(c_{3}^{2}-c_{1}c_{2})=-1;
	\end{cases}$$ as required.
\end{proof}
\begin{lemma}\label{p4le12}
	For $c_{1},c_{2},c_{3}\in\mathbb{F}_{p}^{*},$ let
	$$\rho=\sum_{c\in\mathbb{F}_{p}^{*}}\sum_{x\in\mathbb{F}_{p}^{*}}\overline{\chi}_{1}(x(c-c_{1}))\sum_{y\in\mathbb{F}_{p}^{*}}\overline{\chi}_{1}(-yc_{2})\overline{\eta}(y)\sum_{z\in\mathbb{F}_{p}^{*}}\overline{\chi}_{1}(-\frac{z^{2}c}{4y}-zc_{3}).$$ Then
	$$\rho=\begin{cases}
	-\big(\overline{\eta}(-c_{1})+\overline{\eta}(-c_{2})\big)\overline{G}+\overline{\eta}(-c_{1})(p-1)^{2}\overline{G},&\text{if } c_{3}^{2}-c_{1}c_{2}=0;\\
	-(p+1)\overline{\eta}(-c_{2})\overline{G}-p\overline{\eta}(-c_{1})\overline{G},&\text{if } c_{3}^{2}-c_{1}c_{2}\neq0.
	\end{cases}$$
\end{lemma}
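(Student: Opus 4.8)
The plan is to evaluate $\rho$ by first carrying out the inner sum over $z$ using the Gauss-sum evaluation of Lemma \ref{p4le4}, exactly as was done for $\delta$ in Lemma \ref{p4le11}, and then to handle the remaining sums over $x$, $y$, $c$ in that order. The key structural difference from Lemma \ref{p4le11} is the extra factor $\overline{\eta}(y)$ attached to the $y$-sum, so I expect the bookkeeping of quadratic-character products to be the crux of the argument.

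First I would apply Lemma \ref{p4le4} to the innermost sum $\sum_{z\in\mathbb{F}_p^*}\overline{\chi}_1(-\frac{z^2 c}{4y}-z c_3)$. Writing this as a sum over all of $\mathbb{F}_p$ minus the $z=0$ term, Lemma \ref{p4le4} gives $\overline{\chi}_1\!\big(\frac{y c_3^2}{c}\big)\,\overline{\eta}\!\big(-\frac{c}{4y}\big)\overline{G}-1$, precisely the bracket that appeared in the proof of Lemma \ref{p4le11}. Substituting this, $\rho$ splits into two pieces: a main term carrying $\overline{G}$ and the extra $\overline{\eta}(y)$ factor, and a correction term $-\sum_{c}\sum_x \overline{\chi}_1(x(c-c_1))\sum_y \overline{\chi}_1(-y c_2)\overline{\eta}(y)$. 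In the correction term the $x$-sum over $\mathbb{F}_p^*$ forces $c=c_1$ (the orthogonality of additive characters, as used throughout), and the surviving $y$-sum $\sum_{y}\overline{\chi}_1(-y c_2)\overline{\eta}(y)$ is a Gauss sum that evaluates by character substitution to $\overline{\eta}(-c_2)\overline{G}$. This fixes the correction contribution as $-(p-1)\,\overline{\eta}(-c_2)\overline{G}$.

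Next I would treat the main term. After collecting the $\overline{\eta}(y)$ and $\overline{\eta}(-\frac{c}{4y})$ factors into $\overline{\eta}(-c/4)\,\overline{\eta}(y^2)=\overline{\eta}(-c)$ (since $\overline{\eta}(4)=1$ and $\overline{\eta}(y^2)=1$ for $y\neq 0$), the $y$-dependence in the quadratic character drops out, and the $y$-sum reduces to $\sum_y \overline{\chi}_1\!\big(\frac{y(c_3^2-c c_2)}{c}\big)$. This is $p-1$ when $c_3^2-c c_2=0$ and $-1$ otherwise, by additive orthogonality. I would then combine this with the $x$-sum, which again collapses via orthogonality: $\sum_x \overline{\chi}_1(x(c-c_1))$ is $p-1$ when $c=c_1$ and $-1$ otherwise. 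Organizing the resulting double sum over $c\in\mathbb{F}_p^*$ into the cases $c=c_1$ versus $c\neq c_1$, and separately tracking whether $c_3^2-cc_2=0$, yields the claimed split into the two regimes $c_3^2-c_1c_2=0$ and $c_3^2-c_1c_2\neq 0$.

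The main obstacle will be the careful case analysis in the final double summation over $c$: when $c_3^2-c_1c_2=0$ one must isolate the single value $c=c_3^2/c_2=c_1$ at which both orthogonality relations simultaneously give their large value $p-1$, while when $c_3^2-c_1c_2\neq 0$ the vanishing locus $c=c_3^2/c_2$ of the inner bracket is distinct from $c_1$, forcing a different combination of the $\pm(p-1)$ and $-1$ contributions. Keeping track of the accompanying $\overline{\eta}(-c)$ sign as $c$ ranges—so that terms assemble into the stated $\overline{\eta}(-c_1)$ and $\overline{\eta}(-c_2)$ coefficients—is the delicate part; everything else is a routine application of Lemma \ref{p4le4} together with the orthogonality relations already recorded in the Preliminaries.
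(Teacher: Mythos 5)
Your overall route is the same as the paper's: apply Lemma \ref{p4le4} to the innermost $z$-sum to get the bracket $\overline{\chi}_{1}\big(\frac{yc_{3}^{2}}{c}\big)\overline{\eta}\big(-\frac{c}{4y}\big)\overline{G}-1$, absorb the extra character via $\overline{\eta}(y)\,\overline{\eta}\big(-\frac{c}{4y}\big)=\overline{\eta}(-c)$ so that the quadratic character leaves the $y$-sum, reduce $\sum_{y}\overline{\chi}_{1}\big(\frac{y(c_{3}^{2}-cc_{2})}{c}\big)$ to $p-1$ or $-1$ by orthogonality, and then run the case analysis over $c=c_{1}$ versus $c=c^{*}:=c_{3}^{2}/c_{2}$ versus the remaining values, using $\sum_{c\in\mathbb{F}_{p}^{*}}\overline{\eta}(-c)=0$ and $\overline{\eta}(-c^{*})=\overline{\eta}(-c_{2})$. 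That is exactly the paper's Case 1/Case 2 structure, and your treatment of the main term is sound.

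There is, however, a genuine computational error in your correction term. The $x$-sum runs over $\mathbb{F}_{p}^{*}$, not over all of $\mathbb{F}_{p}$, so it does not ``force $c=c_{1}$'': it equals $p-1$ when $c=c_{1}$ and $-1$ for each of the $p-2$ other values of $c\in\mathbb{F}_{p}^{*}$, whence $\sum_{c\in\mathbb{F}_{p}^{*}}\sum_{x\in\mathbb{F}_{p}^{*}}\overline{\chi}_{1}(x(c-c_{1}))=(p-1)-(p-2)=1$, and the correction contribution is $-\overline{\eta}(-c_{2})\overline{G}$ (as in the paper), not $-(p-1)\,\overline{\eta}(-c_{2})\overline{G}$. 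Notably, you apply the correct $\{p-1,-1\}$ dichotomy to this very same $x$-sum one paragraph later when handling the main term, so the two halves of your argument are internally inconsistent. Carried through, your value shifts the final answer by $-(p-2)\overline{\eta}(-c_{2})\overline{G}$ in both regimes --- e.g.\ for $c_{3}^{2}-c_{1}c_{2}=0$ you would obtain $\overline{\eta}(-c_{1})(p-1)^{2}\overline{G}-\overline{\eta}(-c_{1})\overline{G}-(p-1)\overline{\eta}(-c_{2})\overline{G}$ rather than the stated $-\big(\overline{\eta}(-c_{1})+\overline{\eta}(-c_{2})\big)\overline{G}+\overline{\eta}(-c_{1})(p-1)^{2}\overline{G}$ --- so the claimed agreement with the lemma cannot actually emerge from the computation as you set it up. Correcting this single evaluation (your $y$-sum Gauss-sum value $\overline{\eta}(-c_{2})\overline{G}$ is right) repairs the proof.
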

\begin{proof}
	{\small\begin{align*}
		\rho&=\sum_{c,x\in\mathbb{F}_{p}^{*}}\overline{\chi}_{1}(x(c-c_{1}))\sum_{y\in\mathbb{F}_{p}^{*}}\overline{\chi}_{1}(-yc_{2})\overline{\eta}(y)\sum_{z\in\mathbb{F}_{p}^{*}}\overline{\chi}_{1}(-\frac{z^{2}c}{4y}-zc_{3})\\
		&=\sum_{c,x\in\mathbb{F}_{p}^{*}}\overline{\chi}_{1}(x(c-c_{1}))\sum_{y\in\mathbb{F}_{p}^{*}}\overline{\chi}_{1}(-yc_{2})\overline{\eta}(y)\bigg(\overline{\chi}_{1}\big(\frac{yc_{3}^{2}}{c}\big)\overline{\eta}\big(\frac{-c}{4y}\big)\overline{G}-1\bigg)\text{ by Lemma $\ref{p4le4}$}\\
		&=\sum_{c,x\in\mathbb{F}_{p}^{*}}\overline{\chi}_{1}(x(c-c_{1}))\sum_{y\in\mathbb{F}_{p}^{*}}\overline{\chi}_{1}\big(\frac{y(c_{3}^{2}-cc_{2})}{c}\big)\overline{\eta}\big(-c)\overline{G}-\sum_{c,x\in\mathbb{F}_{p}^{*}}\overline{\chi}_{1}(x(c-c_{1}))\sum_{y\in\mathbb{F}_{p}^{*}}\overline{\chi}_{1}(-yc_{2})\overline{\eta}(y)\\
		&=\overline{G}\sum_{c,x\in\mathbb{F}_{p}^{*}}\overline{\chi}_{1}(x(c-c_{1}))\sum_{y\in\mathbb{F}_{p}^{*}}\overline{\chi}_{1}\big(\frac{y(c_{3}^{2}-cc_{2})}{c}\big)\overline{\eta}\big(-c)-\overline{\eta}(-c_{2})\overline{G}\sum_{c,x\in\mathbb{F}_{p}^{*}}\overline{\chi}_{1}(x(c-c_{1}))\\
		&=(p-1)\overline{G}\sum_{y\in\mathbb{F}_{p}^{*}}\overline{\chi}_{1}\big(\frac{y(c_{3}^{2}-c_{1}c_{2})}{c_{1}}\big)\overline{\eta}\big(-c_{1})-\overline{G}\sum_{c\in\mathbb{F}_{p}^{*}\setminus\{c_{1}\}}\sum_{y\in\mathbb{F}_{p}^{*}}\overline{\chi}_{1}\big(\frac{y(c_{3}^{2}-cc_{2})}{c}\big)\overline{\eta}\big(-c)-\overline{\eta}(-c_{2})\overline{G}\\
		&=(p-1)\overline{\eta}(-c_{1})\overline{G}\sum_{y\in\mathbb{F}_{p}^{*}}\overline{\chi}_{1}\big(\frac{y(c_{3}^{2}-c_{1}c_{2})}{c_{1}}\big)-\overline{G}\sum_{c\in\mathbb{F}_{p}^{*}\setminus\{c_{1}\}}\overline{\eta}\big(-c)\sum_{y\in\mathbb{F}_{p}^{*}}\overline{\chi}_{1}\big(\frac{y(c_{3}^{2}-cc_{2})}{c}\big)-\overline{\eta}(-c_{2})\overline{G}
		\end{align*}}\\
	From now onward, we divide the remaining proof into two cases.\\
	\textbf{Case 1:} Suppose that $c_{3}^{2}-c_{1}c_{2}=0.$ Then, we have
	{\small\begin{align*}
		\rho&=(p-1)\overline{\eta}(-c_{1})\overline{G}\sum_{y\in\mathbb{F}_{p}^{*}}1+\overline{G}\sum_{c\in\mathbb{F}_{p}^{*}\setminus\{c_{1}\}}\overline{\eta}(-c)-\overline{\eta}(-c_{2})\overline{G}\\
		&=(p-1)^{2}\overline{\eta}(-c_{1})\overline{G}-\overline{\eta}(-c_{1})\overline{G}-\overline{\eta}(-c_{2})\overline{G}\\
		&=-\big(\overline{\eta}(-c_{1})+\overline{\eta}(-c_{2})\big)\overline{G}+\overline{\eta}(-c_{1})(p-1)^{2}\overline{G}.
		\end{align*}}
	\textbf{Case 2:} Consider that $c_{3}^{2}-c_{1}c_{2}\neq0.$ Let $c^{*}\in\mathbb{F}_{p}^{*} $ such that $c_{3}^{2}-c^{*}c_{2}=0.$ Then
	{\small\begin{align*}
		&\rho=-(p-1)\overline{\eta}(-c_{1})\overline{G}-\overline{G}\sum_{c\in\mathbb{F}_{p}^{*}\setminus\{c_{1}\}}\overline{\eta}\big(-c)\sum_{y\in\mathbb{F}_{p}^{*}}\overline{\chi}_{1}\big(\frac{y(c_{3}^{2}-cc_{2})}{c}\big)-\overline{\eta}(-c_{2})\overline{G}\\
		&=-(p-1)\overline{\eta}(-c_{1})\overline{G}-\overline{G}\overline{\eta}(-c^{*})(p-1)-\overline{G}\sum_{c\in\mathbb{F}_{p}^{*}\setminus\{c_{1},c^{*}\}}\overline{\eta}\big(-c)\sum_{y\in\mathbb{F}_{p}^{*}}\overline{\chi}_{1}\big(\frac{y(c_{3}^{2}-cc_{2})}{c}\big)-\overline{\eta}(-c_{2})\overline{G}\\
		&=-(p-1)\overline{\eta}(-c_{1})\overline{G}-\overline{\eta}(-c^{*})(p-1)\overline{G}+\overline{G}\sum_{c\in\mathbb{F}_{p}^{*}\setminus\{c_{1},c^{*}\}}\overline{\eta}\big(-c)-\overline{\eta}(-c_{2})\overline{G}\\
		&=-(p+1)\overline{\eta}(-c_{2})\overline{G}-p\overline{\eta}(-c_{1})\overline{G}.
		\end{align*}}
	This completes the proof.
\end{proof}
\begin{lemma}\label{p4le13}
	For $c_{1},c_{2},c_{3}\in\mathbb{F}_{p}^{*},$ let
	$$\sigma=\sum_{c\in\mathbb{F}_{p}^{*}}\sum_{x\in\mathbb{F}_{p}^{*}}\overline{\chi}_{1}(x(c-c_{1}))\sum_{y\in\mathbb{F}_{p}^{*}}\overline{\chi}_{1}(-yc_{2})\overline{\eta}(-cy)\sum_{z\in\mathbb{F}_{p}^{*}}\overline{\chi}_{1}(-\frac{z^{2}c}{4y}-zc_{3}).$$
	 Then
		$$\sigma=\begin{cases}
	(p^{2}-2p-1)\overline{G},&\text{if } c_{3}^{2}-c_{1}c_{2}=0;\\
	-(p+1)\overline{G}-p\overline{\eta}(c_{1}c_{2})\overline{G},&\text{if } c_{3}^{2}-c_{1}c_{2}\neq0.
	\end{cases}$$   
		\end{lemma}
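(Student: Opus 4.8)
The plan is to follow the pattern established in the proofs of Lemmas \ref{p4le11} and \ref{p4le12}: first collapse the innermost $z$-sum via the quadratic evaluation of Lemma \ref{p4le4}, then split $\sigma$ into two pieces and evaluate each by orthogonality of the additive character $\overline{\chi}_1$ together with the standard Gauss-sum identities for $\overline{\eta}$.

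First I would complete the inner sum to run over all of $\mathbb{F}_p$ by adding and subtracting the $z=0$ term. Applying Lemma \ref{p4le4} to $f(z)=-\tfrac{c}{4y}z^2-c_3z$ gives
$$\sum_{z\in\mathbb{F}_p^*}\overline{\chi}_1\!\big(-\tfrac{z^2c}{4y}-zc_3\big)=\overline{\chi}_1\!\big(\tfrac{yc_3^2}{c}\big)\overline{\eta}\!\big(-\tfrac{c}{4y}\big)\overline{G}-1,$$
so that $\sigma=\mathrm{A}-\mathrm{B}$, where $\mathrm{A}$ carries the Gauss-sum term and $\mathrm{B}$ is the contribution of the subtracted $1$.

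The key simplification, and what distinguishes this lemma from Lemma \ref{p4le12}, lives in term $\mathrm{A}$: the factor $\overline{\eta}(-cy)$ from the definition of $\sigma$ multiplies the factor $\overline{\eta}(-\tfrac{c}{4y})$ produced by the Gauss sum, and since $\overline{\eta}(-cy)\overline{\eta}(-\tfrac{c}{4y})=\overline{\eta}(c^2/4)=1$ for $c\in\mathbb{F}_p^*$, the quadratic character disappears entirely. After merging the two $y$-exponents into $\tfrac{y(c_3^2-cc_2)}{c}$, the inner $y$-sum equals $p-1$ precisely when $c=c^*:=c_3^2/c_2$ and $-1$ otherwise. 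I would then split on whether $c^*=c_1$, that is, on whether $c_3^2-c_1c_2=0$, and evaluate the inner sum $\sum_{x}\overline{\chi}_1(x(c-c_1))$ (equal to $p-1$ for $c=c_1$ and $-1$ for $c\ne c_1$) over the remaining values of $c$. This yields $\mathrm{A}=(p^2-p-1)\overline{G}$ in the degenerate case $c_3^2=c_1c_2$ and $\mathrm{A}=-(p+1)\overline{G}$ otherwise.

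For term $\mathrm{B}$ I would factor $\overline{\eta}(-cy)=\overline{\eta}(-c)\overline{\eta}(y)$ and recognize the $y$-sum as a twisted Gauss sum, $\sum_{y\in\mathbb{F}_p^*}\overline{\chi}_1(-yc_2)\overline{\eta}(y)=\overline{\eta}(-c_2)\overline{G}$. The surviving double sum $\sum_{c}\overline{\eta}(-c)\sum_{x}\overline{\chi}_1(x(c-c_1))$ evaluates to $p\,\overline{\eta}(-c_1)$ using $\sum_{c\in\mathbb{F}_p^*}\overline{\eta}(-c)=0$, so that $\mathrm{B}=p\,\overline{\eta}(c_1c_2)\overline{G}$ uniformly in both cases. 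Forming $\sigma=\mathrm{A}-\mathrm{B}$, and noting that $\overline{\eta}(c_1c_2)=\overline{\eta}(c_3^2)=1$ in the degenerate case, then produces the two stated values. The only real care needed is the bookkeeping in term $\mathrm{A}$, where one must track whether the special index $c^*$ coincides with $c_1$ and count the surviving indices accordingly; this is where an off-by-one in the index count would corrupt the final constants.
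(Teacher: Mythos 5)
Your proof is correct and is essentially the argument the paper intends: the paper's own proof of this lemma is just the remark that one argues as in Lemma \ref{p4le12}, i.e.\ apply Lemma \ref{p4le4} to the $z$-sum, split off the subtracted $1$, and evaluate by orthogonality of $\overline{\chi}_{1}$ with a case split on whether $c^{*}=c_{3}^{2}/c_{2}$ coincides with $c_{1}$. Your key simplification --- that $\overline{\eta}(-cy)\,\overline{\eta}\big(-\tfrac{c}{4y}\big)=\overline{\eta}(c^{2}/4)=1$ removes the quadratic character from the main term --- is exactly the feature distinguishing this case, and your tallies $\mathrm{A}=(p^{2}-p-1)\overline{G}$ (from $(p-1)^{2}+(p-2)$) or $-(p+1)\overline{G}$ (from $-2(p-1)+(p-3)$), together with $\mathrm{B}=p\,\overline{\eta}(c_{1}c_{2})\overline{G}$ and $\overline{\eta}(c_{1}c_{2})=\overline{\eta}(c_{3}^{2})=1$ in the degenerate case, check out and reproduce both stated values.
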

\begin{proof}
One may easily prove the lemma by using the arguments similar to the arguments used to  prove the  previous lemma.
\end{proof}
\begin{lemma}\label{p4le14}
	For $c_{1},c_{2},c_{3}\in\mathbb{F}_{p}$, let
	$$\Psi_{4}=\sum_{a,b\in\mathbb{F}_{q}}\sum_{x\in\mathbb{F}_{p}^{*}}\zeta_{p}^{x\mathrm{Tr}(a^{2})-xc_{1}}\sum_{y\in\mathbb{F}_{p}^{*}}\zeta_{p}^{y\mathrm{Tr}(b^{2})-yc_{2}}\sum_{z\in\mathbb{F}_{p}^{*}}\zeta_{p}^{z\mathrm{Tr}(ab)-zc_{3}}.$$ Then\\
	$1.$ if $m$ is even, we have
	{\small$$\Psi_{4}=\begin{cases}
	(p-1)^{2}(p^{m}+G(p-2))G,&\text{if }c_{1}=c_{2}=c_{3}=0;\\
	-(p-1)(p^{m}+G(p-2))G,&\text{if exactly one of $c_{i}$ is nonzero};\\
	(p^{m}+G(p-2))G+\overline{\eta}(-c_{1}c_{2})(p^{m+1}-pG)G,&\text{if } c_{1}\neq0,~c_{2}\neq0 \text{ and } c_{3}=0;\\
	G(p^{m}-2G),&\text{if each $c_{i}\neq 0$} \text{ and }c_{3}^{2}-c_{1}c_{2}=0;\\
	G(p+1)(p^{m}-G)-G^{2},&\text{if each $c_{i}\neq0$}  \text{ and }\overline{\eta}(c_{3}^{2}-c_{1}c_{2})=1;\\
	- G(p-1)(p^{m}-G)-G^{2},&\text{if each $c_{i}\neq0$}  \text{ and }\overline{\eta}(c_{3}^{2}-c_{1}c_{2})=-1;\\
	(p^{m}+G(p-2))G,&\text{if } c_{3}\neq0 \text{ and exactly one of }c_{1}\text{ and }c_{2} \text{ is zero}.\\
	\end{cases}$$}
	$2.$ if $m$ is odd, we have
	{\small$$\Psi_{4}=\begin{cases}	
		-p^{-1}(p-1)^{2}G^{2}\overline{G}^{2},&\text{if } c_{1}=c_{2}=c_{3}=0;\\
		p^{-1}(p-1)G^{2}\overline{G}^{2},&\text{if } c_{1}=c_{2}=0\text{ and}~c_{3}\neq0;\\
		(\overline{\eta}(-c_{2})p^{m}+p^{-1}G\overline{G})(p-1)G\overline{G},&\text{if }c_{2}\neq0 \text { and } c_{1}=c_{3}=0;\\
		-(\overline{\eta}(-c_{2})p^{m}+p^{-1}G\overline{G})G\overline{G},&\text{if } c_{1}=0,~c_{2}\neq0 \text { and } c_{3}\neq0;\\
		(\overline{\eta}(-c_{1})p^{m}+p^{-1}G\overline{G})(p-1)G\overline{G},&\text{if } c_{1}\neq0\text{ and }c_{2}=c_{3}=0;\\
		-(\overline{\eta}(-c_{1})p^{m}+p^{-1}G\overline{G})G\overline{G},&\text{if } c_{1}\neq0,~c_{2}=0 \text{ and } c_{3}\neq0;\\	-\big(\overline{\eta}(-c_{1})+\overline{\eta}(-c_{2}) \big)p^{m}G\overline{G}-(\overline{\eta}(c_{1}c_{2})+p^{-1})G^{2}\overline{G}^{2},&\text{if } c_{1}\neq0,~c_{2}\neq0 \text{ and } c_{3}=0;\\
		\overline{\eta}(-c_{1})(p-2)p^{m}G\overline{G} +(p-2-p^{-1})G^{2}\overline{G}^{2},&\text{if each  $c_{i}\neq 0$} \text{ and }c_{3}^{2}-c_{1}c_{2}=0;\\
		-\big(\overline{\eta}(-c_{1})+\overline{\eta}(-c_{2})\big)p^{m}G\overline{G}-(\overline{\eta}(c_{1}c_{2})+1+p^{-1})G^{2}\overline{G}^{2},&\text{if each  $c_{i}\neq 0$} \text{ and }c_{3}^{2}-c_{1}c_{2}\neq0.\end{cases}~~~~$$}
\end{lemma}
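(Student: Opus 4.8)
The plan is to collapse the inner summation over $\mathbb{F}_q$ first and then re-parametrise by the value $c=\mathrm{Tr}(a^2)$, so as to bring $\Psi_4$ into exactly the shape of the auxiliary sums $\delta,\rho,\sigma$ of Lemmas \ref{p4le11}--\ref{p4le13}. Writing $\zeta_p^{x\mathrm{Tr}(a^2)}=\chi_1(xa^2)$ and interchanging the order of summation, I would first treat the exponent as a quadratic in $b$ with leading coefficient $y\neq0$ and apply Lemma \ref{p4le4}, obtaining $\sum_{b\in\mathbb{F}_q}\chi_1(yb^2+zab)=\eta(y)G\,\chi_1\!\big(-z^2a^2/(4y)\big)$. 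Substituting this back leaves $\sum_{a}\chi_1\!\big((x-z^2/(4y))a^2\big)$; grouping this $a$-sum by the value $c=\mathrm{Tr}(a^2)\in\mathbb{F}_p$ turns $\sum_{x}\zeta_p^{x\mathrm{Tr}(a^2)-xc_1}$ into $\sum_{x}\overline{\chi}_1(x(c-c_1))$ and replaces $\sum_a$ by the weights $n_c=\#\{a:\mathrm{Tr}(a^2)=c\}$ of Lemma \ref{p4le5}. This produces the master identity
\begin{equation*}
\Psi_4 = G\sum_{c\in\mathbb{F}_p} n_c\sum_{x\in\mathbb{F}_p^*}\overline{\chi}_1(x(c-c_1))\sum_{y\in\mathbb{F}_p^*}\overline{\chi}_1(-yc_2)\,\eta(y)\sum_{z\in\mathbb{F}_p^*}\overline{\chi}_1\big(-\tfrac{z^2c}{4y}-zc_3\big),
\end{equation*}
whose inner triple sum is the summand of $\delta$, $\rho$ or $\sigma$ once $\eta(y)$ is resolved through Lemma \ref{p4le2}.

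For even $m$, Lemma \ref{p4le2} gives $\eta(y)=1$, so the inner triple sum is exactly the summand of $\delta$. Using $n_0=p^{m-1}+p^{-1}(p-1)G$ and $n_c=p^{m-1}-p^{-1}G$ for $c\neq0$ and isolating the $c=0$ term (whose inner contribution is $(-1)(-1)(-1)=-1$ when $c_1,c_2,c_3\in\mathbb{F}_p^*$), the sum over $c\neq0$ collapses to $\delta$ and I get $\Psi_4=G\big[-n_0+(p^{m-1}-p^{-1}G)\,\delta\big]$. Substituting the three values of $\delta$ from Lemma \ref{p4le11}, according to whether $c_3^2-c_1c_2=0$ or $\overline{\eta}(c_3^2-c_1c_2)=\pm1$, and simplifying reproduces the corresponding lines of the even-$m$ formula.

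For odd $m$, Lemma \ref{p4le2} gives $\eta(y)=\overline{\eta}(y)$, so the inner sum is the summand of $\rho$. Here $n_0=p^{m-1}$ and $n_c=p^{m-1}+p^{-1}\overline{\eta}(-c)G\overline{G}$ for $c\neq0$; the key observation is that the weight $\overline{\eta}(-c)$ combines with the $\overline{\eta}(y)$ already present to give $\overline{\eta}(-cy)$, i.e.\ it converts the $\rho$-summand into the $\sigma$-summand. Thus, for $c_1,c_2,c_3\in\mathbb{F}_p^*$, the $c=0$ term contributes $p^{m-1}\overline{\eta}(-c_2)\overline{G}$ and the remaining terms give $p^{m-1}\rho+p^{-1}G\overline{G}\,\sigma$, so that $\Psi_4=G\big[p^{m-1}\overline{\eta}(-c_2)\overline{G}+p^{m-1}\rho+p^{-1}G\overline{G}\,\sigma\big]$. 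Inserting $\rho$ and $\sigma$ from Lemmas \ref{p4le12} and \ref{p4le13} in their two regimes $c_3^2-c_1c_2=0$ and $c_3^2-c_1c_2\neq0$ and collecting the $G\overline{G}$ and $G^2\overline{G}^2$ terms yields the last three lines of the odd-$m$ formula.

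The remaining entries, where one or more of $c_1,c_2,c_3$ vanish, follow the same reduction but with degenerate inner sums: a vanishing $c_i$ turns a factor $-1$ into $p-1$ in the elementary character sums, and $\delta,\rho,\sigma$ are replaced by one-dimensional Gauss sums or by orthogonality. I expect the main obstacle to be not any single analytic step but the bookkeeping: correctly enumerating all subcases (which of the $c_i$ are zero, combined with the value of $\overline{\eta}(c_3^2-c_1c_2)$) and cleanly gathering the mixed powers of $G$ and $\overline{G}$ after substituting $\delta,\rho,\sigma$. It is worth emphasising that the discriminant $c_3^2-c_1c_2$, and hence the genuinely three-valued behaviour of $\Psi_4$, enters only through Lemmas \ref{p4le11}--\ref{p4le13}.
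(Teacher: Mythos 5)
Your proposal reproduces the paper's own proof essentially step for step: both expand $\Psi_4$ into character sums, collapse the $b$-sum via Lemma \ref{p4le4} to produce the factor $\eta(y)G\,\chi_1\big(-z^2a^2/(4y)\big)$, regroup the $a$-sum by $c=\mathrm{Tr}(a^2)$ using the counts $n_c$ of Lemma \ref{p4le5}, resolve $\eta(y)$ by the parity of $m$ through Lemma \ref{p4le2}, and reduce the all-nonzero cases to $\delta$ for even $m$ and to the $c=0$ term plus $p^{m-1}\rho+p^{-1}G\overline{G}\,\sigma$ for odd $m$ via Lemmas \ref{p4le11}--\ref{p4le13}. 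Your intermediate identities (e.g.\ $\Psi_4=G\big[-n_0+(p^{m-1}-p^{-1}G)\delta\big]$ for even $m$ with all $c_i\neq0$, and $\Psi_4=G\big[p^{m-1}\overline{\eta}(-c_2)\overline{G}+p^{m-1}\rho+p^{-1}G\overline{G}\,\sigma\big]$ for odd $m$) check out against the tabulated values, and your deferral of the degenerate subcases to routine orthogonality computations matches the level of detail in the paper's own ``simple calculation.''
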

\begin{proof} By Lemma \ref{p4le4}, we have
		\begin{align*}
	\Psi_{4}&=\sum_{a,b\in\mathbb{F}_{q}}\sum_{x\in\mathbb{F}_{p}^{*}}\zeta_{p}^{x\mathrm{Tr}(a^{2})-xc_{1}}\sum_{y\in\mathbb{F}_{p}^{*}}\zeta_{p}^{y\mathrm{Tr}(b^{2})-yc_{2   }}\sum_{z\in\mathbb{F}_{p}^{*}}\zeta_{p}^{z\mathrm{Tr}(ab)-zc_{3}}\\
	&=\sum_{x,y,z\in\mathbb{F}_{p}^{*}}\overline{\chi}_{1}(-xc_{1}-yc_{2}-zc_{3})\sum_{a\in\mathbb{F}_{q}}\chi_{1}(xa^{2})\sum_{b\in\mathbb{F}_{q}}\chi_{1}(yb^{2}+zab)\\
	&=\sum_{x,y,z\in\mathbb{F}_{p}^{*}}\overline{\chi}_{1}(-xc_{1}-yc_{2}-zc_{3})\sum_{a\in\mathbb{F}_{q}}\chi_{1}(xa^{2})\chi_{1}\big(-\frac{z^{2}a^{2}}{4y}\big)\eta(y)G\\
	&=G\sum_{x,y,z\in\mathbb{F}_{p}^{*}}\overline{\chi}_{1}(-xc_{1}-yc_{2}-zc_{3})\eta(y)\sum_{a\in\mathbb{F}_{q}}\overline{\chi}_{1}(x\mathrm{Tr}(a^{2})\overline{\chi}_{1}\big(-\frac{z^{2}\mathrm{Tr}(a^{2})}{4y}\big).
	\end{align*}
	Now, we divide the remaining proof into two cases.\\
	\textbf{Case 1:} We consider that $2\mid m$. Then, by Lemmas  \ref{p4le2}, \ref{p4le5} and \ref{p4le11}, we have
	{\small\begin{align*}
	\Psi_{4}&=G(p^{m-1}+p^{-1}(p-1)G)\sum_{x,y,z\in\mathbb{F}_{p}^{*}}\overline{\chi}_{1}(-xc_{1}-yc_{2}-zc_{3})\\&  
	~~~~~~+G(p^{m-1}-p^{-1}G)\sum_{x,y,z\in\mathbb{F}_{p}^{*}}\overline{\chi}_{1}(-xc_{1}-yc_{2}-zc_{3})\sum_{c\in\mathbb{F}_{p}^{*}}\overline{\chi}_{1}(xc)\overline{\chi}_{1 }(-\frac{z^{2}c}{4y})\\
	&=G(p^{m-1}+p^{-1}(p-1)G)\sum_{x,y,z\in\mathbb{F}_{p}^{*}}\overline{\chi}_{1}(-xc_{1}-yc_{2}-zc_{3})\\&  
	~~~~~~+G(p^{m-1}-p^{-1}G)\sum_{x,y\in\mathbb{F}_{p}^{*}}\overline{\chi}_{1}(-xc_{1}-yc_{2})\sum_{c\in\mathbb{F}_{p}^{*}}\overline{\chi}_{1}(xc)\sum_{z\in\mathbb{F}_{p}^{*}}\overline{\chi}_{1 }(-\frac{z^{2}c}{4y}-zc_{3})\\
	&=\begin{cases}
	(p-1)^{2}(p^{m}+G(p-2))G,&\text{if all $c_{i} $ are zero};\\
	-(p-1)(p^{m}+G(p-2))G,&\text{if exactly one of $c_{i}$ is non zero};\\	
	 (p^{m}+G(p-2))G,&\text{if } c_{1}=0,~c_{2}\neq0 \text { and } c_{3}\neq0;\\
	(p^{m}+G(p-2))G,&\text{if } c_{1}\neq0,~c_{2}=0 \text{ and } c_{3}\neq0;\\
	(p^{m}+G(p-2))G+\overline{\eta}(-c_{1}c_{2})(p^{m+1}-pG)G,&\text{if } c_{1}\neq0,~c_{2}\neq0 \text{ and } c_{3}=0;\\
		G(p^{m}-2G),&\text{if each $c_{i}\neq 0$} \text{ and }c_{3}^{2}-c_{1}c_{2}=0;\\
	 G(p+1)(p^{m}-G)-G^{2},&\text{if each $c_{i}\neq0$}  \text{ and }\overline{\eta}(c_{3}^{2}-c_{1}c_{2})=1;\\
		- G(p-1)(p^{m}-G)-G^{2},&\text{if each $c_{i}\neq0$}  \text{ and }\overline{\eta}(c_{3}^{2}-c_{1}c_{2})=-1. 
	\end{cases}
	\end{align*}}
	\textbf{Case 2:} Assume that $2\nmid m$. Then, by Lemmas \ref{p4le2} and \ref{p4le5}, we have
{\small	\begin{align*}
	\Psi_{4}&=p^{m-1}G\sum\limits_{x,y,z\in\mathbb{F}_{p}^{*}}\overline{\chi}_{1}(-xc_{1}-yc_{2}-zc_{3})\overline{\eta}(y)\\&  
	~~~~~~~+G\sum\limits_{x,y,z\in\mathbb{F}_{p}^{*}}\overline{\chi}_{1}(-xc_{1}-yc_{2}-zc_{3})\overline{\eta}(y)\sum\limits_{c\in\mathbb{F}_{p}^{*}}(p^{m-1}+p^{-1}\overline{\eta}(-c)G\overline{G})\overline{\chi}_{1}(xc)\overline{\chi}_{1}(-\frac{z^{2}c}{4y})\\
	&=p^{m-1}G\sum\limits_{x,y,z\in\mathbb{F}_{p}^{*}}\overline{\chi}_{1}(-xc_{1}-yc_{2}-zc_{3})\overline{\eta}(y)\\&~~~~~~+p^{m-1}G\sum_{c\in\mathbb{F}_{p}^{*}}\sum_{x\in\mathbb{F}_{p}^{*}}\overline{\chi}_{1}(x(c-c_{1}))\sum_{y\in\mathbb{F}_{p}^{*}}\overline{\chi}_{1}(-yc_{2})\overline{\eta}(y)\sum_{z\in\mathbb{F}_{p}^{*}}\overline{\chi}_{1}(-\frac{z^{2}c}{4y}-zc_{3})\\&~~~~~~~~~~~~~~+p^{-1}G^{2}\overline{G}\sum_{c\in\mathbb{F}_{p}^{*}}\sum_{x\in\mathbb{F}_{p}^{*}}\overline{\chi}_{1}(x(c-c_{1}))\sum_{y\in\mathbb{F}_{p}^{*}}\overline{\chi}_{1}(-yc_{2})\overline{\eta}(-cy)\sum_{z\in\mathbb{F}_{p}^{*}}\overline{\chi}_{1}(-\frac{z^{2}c}{4y}-zc_{3}).\end{align*}}\\
A simple calculation including the applications of the Lemmas \ref{p4le12} and \ref{p4le13} leads us to the following values of $\Psi_{4}$:
	{\small$$\Psi_{4}=\begin{cases}	
	-p^{-1}(p-1)^{2}G^{2}\overline{G}^{2},&\text{if } c_{1}=0,~c_{2}=0\text{ and}~c_{3}=0;\\
p^{-1}(p-1)G^{2}\overline{G}^{2},&\text{if } c_{1}=0,~c_{2}=0\text{ and}~c_{3}\neq0;\\
	(\overline{\eta}(-c_{2})p^{m}+p^{-1}G\overline{G})(p-1)G\overline{G},&\text{if } c_{1}=0,~c_{2}\neq0 \text { and } c_{3}=0;\\
	-(\overline{\eta}(-c_{2})p^{m}+p^{-1}G\overline{G})G\overline{G},&\text{if } c_{1}=0,~c_{2}\neq0 \text { and } c_{3}\neq0;\\
	(\overline{\eta}(-c_{1})p^{m}+p^{-1}G\overline{G})(p-1)G\overline{G},&\text{if } c_{1}\neq0,~c_{2}=0 \text{ and } c_{3}=0;\\
	-(\overline{\eta}(-c_{1})p^{m}+p^{-1}G\overline{G})G\overline{G},&\text{if } c_{1}\neq0,~c_{2}=0 \text{ and } c_{3}\neq0;\\
	-\big(\overline{\eta}(-c_{1})+\overline{\eta}(-c_{2}) \big)p^{m}G\overline{G}-(\overline{\eta}(c_{1}c_{2})+p^{-1})G^{2}\overline{G}^{2},&\text{if } c_{1}\neq0,~c_{2}\neq0 \text{ and } c_{3}=0;\\
		\overline{\eta}(-c_{1})(p-2)p^{m}G\overline{G} +(p-2-p^{-1})G^{2}\overline{G}^{2},&\text{if each  $c_{i}\neq 0$} \text{ and }c_{3}^{2}-c_{1}c_{2}=0;\\
	-\big(\overline{\eta}(-c_{1})+\overline{\eta}(-c_{2})\big)p^{m}G\overline{G}-(\overline{\eta}(c_{1}c_{2})+1+p^{-1})G^{2}\overline{G}^{2},&\text{if each  $c_{i}\neq 0$} \text{ and }c_{3}^{2}-c_{1}c_{2}\neq0.\end{cases}$$}
	This completes the proof.	
\end{proof}
\begin{lemma}\label{p4le15}
	For $c_{1},c_{2},c_{3}\in\mathbb{F}_{p}$, let
	$$N(c_{1},c_{2},c_{3})=\#\{(a,b)\in\mathbb{F}_{q}:\mathrm{Tr}(a^{2})=c_{1},~\mathrm{Tr}(b^{2})=c_{2}\text{ and }\mathrm{Tr}(ab)=c_{3}\}.$$ 
	Then the following Tables describe $N(c_{1},c_{2},c_{3}).$			
\end{lemma}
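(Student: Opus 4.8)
The plan is to reduce this three–condition count to objects already evaluated in the preliminary lemmas, via additive–character orthogonality. Expressing each trace constraint through the indicator $\frac1p\sum_{w\in\mathbb{F}_p}\zeta_p^{w(\mathrm{Tr}(t)-c)}$, one obtains the master identity
$$N(c_1,c_2,c_3)=\frac{1}{p^{3}}\sum_{x,y,z\in\mathbb{F}_p}\ \sum_{a,b\in\mathbb{F}_q}\zeta_p^{x(\mathrm{Tr}(a^2)-c_1)+y(\mathrm{Tr}(b^2)-c_2)+z(\mathrm{Tr}(ab)-c_3)}.$$
The whole proof then amounts to evaluating this triple sum piece by piece and dividing by $p^{3}$.

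The key organizing step is to split the outer sum according to whether $z=0$. The slice $z=0$ is, by the same orthogonality now applied only in the two variables $x,y$, precisely $p^{2}N(c_1,c_2)$, which is handed to us by Lemma \ref{p4le8}. For the slice $z\neq0$ I would further split by the support of $(x,y)$ into four blocks, and each block is exactly one of the quantities already computed: the block $x\neq0,\,y\neq0$ is $\Psi_4$ (Lemma \ref{p4le14}); the block $x\neq0,\,y=0$ is $\Psi_2(c_1,c_3)$ (Lemma \ref{p4le10}); the block $x=0,\,y\neq0$ is $\Psi_2(c_2,c_3)$ after the harmless relabelling $a\leftrightarrow b$; and the block $x=y=0$ is the elementary term $\Theta:=\sum_{z\in\mathbb{F}_p^{*}}\zeta_p^{-zc_3}\sum_{a,b\in\mathbb{F}_q}\zeta_p^{z\mathrm{Tr}(ab)}$. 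This yields the clean decomposition
$$p^{3}N(c_1,c_2,c_3)=p^{2}N(c_1,c_2)+\Theta+\Psi_2(c_1,c_3)+\Psi_2(c_2,c_3)+\Psi_4.$$
The term $\Theta$ is immediate: for fixed $z\neq0$ the inner sum $\sum_{a,b}\zeta_p^{z\mathrm{Tr}(ab)}$ collapses to $p^{m}$ (only $a=0$ survives, since for $a\neq0$ the variable $b\mapsto ab$ runs over $\mathbb{F}_q$ and kills the character sum), so $\Theta=p^{m}(p-1)$ when $c_3=0$ and $\Theta=-p^{m}$ when $c_3\neq0$.

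It then remains to substitute the tabulated values of $N(c_1,c_2)$, $\Psi_2$ and $\Psi_4$ and to collect terms, splitting throughout according to the parity of $m$ and to the zero/nonzero pattern of $(c_1,c_2,c_3)$, and—in the generic case where all three are nonzero—according to the three-way alternative $c_3^{2}-c_1c_2=0$, $\overline{\eta}(c_3^{2}-c_1c_2)=1$, $\overline{\eta}(c_3^{2}-c_1c_2)=-1$ inherited from Lemma \ref{p4le14} (and tied to the counts $L_j$ of Lemma \ref{p4le9}). I expect the genuine work, and the main source of error, to lie in this final bookkeeping: each summand on the right has its own case table, so one must align the case distinctions consistently and simplify the Gauss–sum products $G,\ G\overline G,\ G^{2},\ G^{2}\overline G^{2}$ using the explicit evaluations of Lemma \ref{p4le1} ($G^{2}=(-1)^{m-1}p^{m}\cdot(\sqrt{-1})^{(p-1)^2 m/2}$ and $\overline G^{2}=(\sqrt{-1})^{(p-1)^2/2}p$) to turn everything into the clean integer-plus-Gauss-sum entries recorded in the two tables. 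A useful correctness check along the way is the relation $\sum_{c_3\in\mathbb{F}_p}N(c_1,c_2,c_3)=N(c_1,c_2)$, which must be reproduced by summing the table rows over $c_3$.
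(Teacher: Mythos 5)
Your decomposition is exactly the paper's proof: the same orthogonality identity, the same split $p^{3}N(c_{1},c_{2},c_{3})=p^{2}N(c_{1},c_{2})+\Psi_{1}+\Psi_{2}+\Psi_{3}+\Psi_{4}$ (your $\Theta$ is the paper's $\Psi_{1}$, your $\Psi_{2}(c_{2},c_{3})$ is its $\Psi_{3}$ via the $a\leftrightarrow b$ symmetry), with the blocks evaluated by Lemmas \ref{p4le8}, \ref{p4le10} and \ref{p4le14}, and the final substitution left as bookkeeping just as the paper does. One minor slip in an aside: squaring Lemma \ref{p4le1} gives $G^{2}=(\sqrt{-1})^{(p-1)^{2}m/2}\,p^{m}$ — the factor $(-1)^{m-1}$ disappears upon squaring — though this is immaterial here since the table entries are left symbolically in terms of $G$, $G\overline{G}$, $G^{2}$ and $G^{2}\overline{G}^{2}$.
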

\vspace{1cm}
\begin{center}
	\textbf{Table 1:} Value of $N(c_{1},c_{2},c_{3})$ for even $m$			
	\begin{tabular}{|p{8.5 cm}|p{5.5cm}|}				
		\hline
		Value of $N(c_{1},c_{2},c_{3})$& Conditions\\
		\hline
		$p^{2m-3}+p^{m-3}(p-1)(2p-1)+p^{m-3}(p^{2}-1)G+p^{-3}(p-1)^{3}G^{2}$ & $ c_{1}=c_{2}= c_{3}=0$\\
		\hline
		$p^{2m-3}+p^{m-3}(p-1)G-p^{m-3}(2p-1)+p^{-3}(p-1){G}^{2 }$ & $c_{1}=c_{2}=0 \text { and } c_{3}\neq0$\\
		\hline
		$p^{2m-3}-p^{m-3}(p-1)-p^{m-3}G+p^{-3}(p-1)G^{2}+p^{-2}\overline{\eta}(-c_{1}c_{2})(p^{m}-G)G$ & $ c_{1}\neq0,~c_{2}\neq0~\text{and }c_{3}=0$\\
		\hline
		$p^{2m-3}+p^{m-3}(p-1)^{2}-p^{m-3}G-p^{-3}(p-1)^{2}G^{2}$ & $c_{3}=0$ and exactly one of  $c_{1}$ and $c_{2}$ is zero\\
		\hline
		$p^{2m-3}+p^{m-3}(p-1)(G-1)-p^{-3}G^{2}$ & $c_{3}\neq0$ and exactly one of  $c_{1}$ and $c_{2}$ is zero\\				
		\hline
		$p^{2m-3}-p^{m-3}(G-1)-p^{-3}G^{2}$ &$ \text{each }c_{i}\neq 0 \text{ and }c_{3}^{2}-c_{1}c_{2}=0$\\
		\hline
		$p^{2m-3}-p^{m-3}(2G-1)+p^{-3}(p+1)(p^{m}-G)G$ & $\text{each }c_{i}\neq 0 \text{ and }\overline{\eta}(c_{3}^{2}-c_{1}c_{2})=1$\\
		\hline
		$p^{2m-3}-p^{m-3}(2G-1)-p^{-3}(p-1)(p^{m}-G)G$ & $ \text{each }c_{i}\neq 0$ and $\overline{\eta}(c_{3}^{2}-c_{1}c_{2})=-1$\\
		\hline
	\end{tabular}
\end{center}

\begin{center}
	\textbf{Table 2:} Value of $N(c_{1},c_{2},c_{3})$ for odd $m$	\begin{tabular}{|p{8.5cm}| p{5.5cm}|}
		\hline
		Value of $N(c_{1},c_{2},c_{3})$& Conditions\\			
		\hline
		$p^{2m-3}+p^{m-3}(p-1)(2p-1)-p^{-4}(p-1)^{2}G^{2}\overline{G}^{2}$ & $ c_{1}=c_{2}=c_{3}=0$\\
		\hline
		$p^{2m-3}-p^{m-3}(2p-1)+p^{-4}(p-1)  G^{2}\overline{G}^{2}$ & $ c_{1}=c_{2}=0~\text{and}~c_{3}\neq0$\\
		\hline
		$p^{2m-3}+p^{m-3}(p-1)^{2}+p^{-4}(p-1)G^{2}\overline{G}^{2}+p^{m-2}\overline{\eta}(-c_{2})G\overline{G}$ & $ c_{2}\neq0$ and $c_{1}=c_{3}=0$\\
		\hline
		$p^{2m-3}+p^{m-3}(p-1)^{2}+p^{-4}(p-1)G^{2}\overline{G}^{2}+p^{m-2}\overline{\eta}(-c_{1})G\overline{G}$ & $c_{1}\neq0$ and $c_{2}= c_{3}=0$\\
		\hline
		$p^{2m-3}-p^{m-3}(p-1)-p^{-4}G^{2}\overline{G}^{2}$ & $  \text{exactly any two of $c_{i}$ are nonzero }$\\
		\hline
		$p^{2m-3}+p^{m-3}+p^{-4}(p^{2}-p-1)G^{2}\overline{G}^{2}+p^{m-3}(\overline{\eta}(-c_{1})(p-1)+\overline{\eta}(-c_{2}))G\overline{G}$&each $c_{i}\neq0$  and $c_{3}^{2}-c_{1}c_{2}=0$\\
		\hline
		$p^{2m-3}+p^{m-3}-p^{-4}(p+1)G^{2}\overline{G}^{2}$ &each $c_{i}\neq0$  and $c_{3}^{2}-c_{1}c_{2}\neq0$\\
		\hline		
	\end{tabular}
\end{center}
\begin{proof} By the properties of additive character, one can easily obtain that 
	\begin{align*}
	N(c_{1},c_{2},c_{3})&=\frac{1}{p^{3}}\sum_{a,b\in\mathbb{F}_{q}}\Big(\sum_{x\in\mathbb{F}_{p}}\zeta_{p}^{x\mathrm{Tr}(a^{2})-xc_{1}}\Big)\Big(\sum_{y\in\mathbb{F}_{p}}\zeta_{p}^{y\mathrm{Tr}(b^{2})-yc_{2}}\Big)\Big(\sum_{z\in\mathbb{F}_{p}}\zeta_{p}^{z\mathrm{Tr}(ab)-zc_{3}}\Big)\\
	&=\frac{1}{p^{3}}\sum_{a,b\in\mathbb{F}_{q}}\Big(\sum_{x\in\mathbb{F}_{p}}\zeta_{p}^{x\mathrm{Tr}(a^{2})-xc_{1}}\Big)\Big(\sum_{y\in\mathbb{F}_{p}}\zeta_{p}^{y\mathrm{Tr}(b^{2})-yc_{2}}\Big)\Big(1+\sum_{z\in\mathbb{F}_{p}^{*}}\zeta_{p}^{z\mathrm{Tr}(ab)-zc_{3}}\Big)\\
	&=\frac{N(c_{1},c_{2})}{p}+\frac{1}{p^{3}}(\Psi_{1}+\Psi_{2}+\Psi_{3}+\Psi_{4}),
	\end{align*}
	where
	\begin{align*}
	\Psi_{1}&=\sum_{a,b\in\mathbb{F}_{q}}\sum_{z\in\mathbb{F}_{p}^{*}}\zeta_{p}^{z\mathrm{Tr}(ab)-zc_{3}}=\begin{cases}
	p^{m}(p-1),&\text{if } c_{3}=0,\\
	-p^{m},&\text{if } c_{3}\neq0;
	\end{cases}\\
	\Psi_{2}&=\sum_{a,b\in\mathbb{F}_{q}}\sum_{x\in\mathbb{F}_{p}^{*}}\zeta_{p}^{x\mathrm{Tr}(a^{2})-xc_{1}}\sum_{z\in\mathbb{F}_{p}^{*}}\zeta_{p}^{z\mathrm{Tr}(ab)-zc_{3}},\\
	\Psi_{3}&=\sum_{a,b\in\mathbb{F}_{q}}\sum_{y\in\mathbb{F}_{p}^{*}}\zeta_{p}^{y\mathrm{Tr}(b^{2})-yc_{2}}\sum_{z\in\mathbb{F}_{p}^{*}}\zeta_{p}^{z\mathrm{Tr}(ab)-zc_{3}},\\
	\Psi_{4}&=\sum_{a,b\in\mathbb{F}_{q}}\sum_{x\in\mathbb{F}_{p}^{*}}\zeta_{p}^{x\mathrm{Tr}(a^{2})-xc_{1}}\sum_{y\in\mathbb{F}_{p}^{*}}\zeta_{p}^{y\mathrm{Tr}(b^{2})-yc_{2}}\sum_{z\in\mathbb{F}_{p}^{*}}\zeta_{p}^{z\mathrm{Tr}(ab)-zc_{3}}.
	\end{align*}	
	The result now follows directly from the Lemmas \ref{p4le8}, \ref{p4le10} and \ref{p4le14}.
	\end{proof}

From now onward, to avoid the large presentations of the results, we put $\alpha=\mathrm{Tr}(a^{2})$, $\beta=\mathrm{Tr}(b^{2})$ and $\gamma=\mathrm{Tr}(ab)$, where $a,b\in\mathbb{F}_{q}^{*}$.
\begin{lemma}\label{p4le16}
	For  $a, b\in\mathbb{F}_{q}^{*}$, where $a\neq wb$ for all $w\in\mathbb{F}_{p}^{*}$, let
	$$\Omega_{4}=\sum_{d\in\mathbb{F}_{q}}\sum_{x\in\mathbb{F}_{p}^{*}}\zeta_{p}^{x\mathrm{Tr}(d^{2})}\sum_{y\in\mathbb{F}_{p}^{*}}\zeta_{p}^{y\mathrm{Tr}(ad)}\sum_{z\in\mathbb{F}_{p}^{*}}\zeta_{p}^{z\mathrm{Tr}(bd)}.$$ Then\\
	$1.$ if $m$ is even, we have
	$$\Omega_{4}=\begin{cases}
	G(p-1)^{3},& \text{if } \alpha= \beta= \gamma=0;\\
		G(p-1),&\text{if } \alpha\neq0,~ \beta\neq0 \text{ and } \alpha\beta=\gamma^{2}; \\
	\overline{\eta}\big(\gamma^{2}-\alpha\beta\big)p(p-1)G+G(p-1),&\text{if } \alpha\neq0,~ \beta\neq0 \text{ and } \alpha\beta\neq\gamma^{2};\\
	-G(p-1)^{2},& \text{if exactly one of } \alpha, \beta \text{ and } \gamma \text{ is nonzero};\\
		G(p-1),&\text{if exactly one of } \alpha \text{ and }\beta \text{ is zero} \text{ and } \gamma\neq0.
	\end{cases}
	$$
	$2.$ if  $m$ is odd, we have
	$$\Omega_{4}=\begin{cases}
	0,& \text{if } \alpha=\beta=0;\\
	\overline{\eta}(-\beta)(p-1)^{2}G\overline{G},& \text{if } \beta\neq0 \text{ and }\alpha= \gamma=0;\\
	-\overline{\eta}(-\beta)(p-1)G\overline{G},& \text{if } \alpha=0,~ \beta\neq0 \text{ and } \gamma\neq0;\\
	\overline{\eta}(-\alpha)(p-1)^{2}G\overline{G},& \text{if } \alpha\neq0 \text{ and }\beta=\gamma=0;\\
	-\overline{\eta}(-\alpha)(p-1)G\overline{G}, &\text{if } \alpha\neq0,~ \beta=0 \text{ and } \gamma\neq0;\\
	\Big(\overline{\eta}(-\beta)(p-1)-\overline{\eta}(-\alpha)\Big)(p-1)G\overline{G},&\text{if } \alpha\neq0,~ \beta\neq0 \text{ and } \alpha\beta=\gamma^{2};\\
	-\Big(\overline{\eta}(-\beta)+\overline{\eta}(-\alpha)\Big)(p-1)G\overline{G},&\text{if } \alpha\neq0,~ \beta\neq0 \text{ and } \alpha\beta\neq\gamma^{2}.
	\end{cases}~~~~~~~~~~~~~ $$
\end{lemma}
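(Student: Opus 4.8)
The plan is to push the summation over $d\in\mathbb{F}_q$ to the inside and recognise the resulting inner sum as a one-variable quadratic Gauss sum. After interchanging the order of summation one may write
$$\Omega_4=\sum_{x,y,z\in\mathbb{F}_p^{*}}\ \sum_{d\in\mathbb{F}_q}\chi_1\big(xd^2+(ya+zb)d\big),$$
since $\zeta_p^{x\mathrm{Tr}(d^2)+y\mathrm{Tr}(ad)+z\mathrm{Tr}(bd)}=\chi_1(xd^2+yad+zbd)$. For fixed $x\in\mathbb{F}_p^{*}$ the inner sum is exactly of the shape treated in Lemma \ref{p4le4} with $a_2=x\neq0$ and $a_1=ya+zb$, so it equals $\eta(x)G\,\chi_1\!\big(-(ya+zb)^2/(4x)\big)$. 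Because $1/(4x)\in\mathbb{F}_p$ and the trace is $\mathbb{F}_p$-linear, the character value simplifies to $\overline{\chi}_1\!\big(-Q(y,z)/(4x)\big)$, where $Q(y,z)=\mathrm{Tr}\big((ya+zb)^2\big)=\alpha y^2+2\gamma yz+\beta z^2$ is a binary quadratic form over $\mathbb{F}_p$ in the coefficients $\alpha,\beta,\gamma$. Thus
$$\Omega_4=G\sum_{x,y,z\in\mathbb{F}_p^{*}}\eta(x)\,\overline{\chi}_1\!\Big(-\tfrac{Q(y,z)}{4x}\Big).$$

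Next I would split according to the parity of $m$, using Lemma \ref{p4le2} to evaluate $\eta(x)$ for $x\in\mathbb{F}_p^{*}$. When $m$ is even, $\eta(x)=1$, and summing over $x$ first gives $p-1$ when $Q(y,z)=0$ and $-1$ otherwise (the bijection $x\mapsto -Q/(4x)$ of $\mathbb{F}_p^{*}$ together with orthogonality of $\overline{\chi}_1$). Writing $N_0=\#\{(y,z)\in\mathbb{F}_p^{*}\times\mathbb{F}_p^{*}:Q(y,z)=0\}$ this collapses to $\Omega_4=G\big(pN_0-(p-1)^2\big)$, so the whole problem reduces to counting $N_0$. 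When $m$ is odd, $\eta(x)=\overline{\eta}(x)$, and the substitution $x\mapsto1/x$ turns the inner sum into a Gauss sum: it vanishes when $Q(y,z)=0$ and equals $\overline{\eta}(-Q(y,z))\,\overline{G}$ otherwise (using $\overline{\eta}(4)=1$). Hence $\Omega_4=G\overline{G}\sum_{(y,z)\in\mathbb{F}_p^{*}\times\mathbb{F}_p^{*}}\overline{\eta}\big(-Q(y,z)\big)$, and the task reduces to a quadratic-form character sum over the punctured plane.

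Both reductions are then finished by a case analysis on $Q$ driven by which of $\alpha,\beta,\gamma$ vanish and by the discriminant $\gamma^2-\alpha\beta$. In the degenerate regimes $Q$ factors: if $\alpha\neq0,\beta\neq0$ and $\gamma^2=\alpha\beta$ then $Q=\alpha(y+\tfrac{\gamma}{\alpha}z)^2$, while if one of $\alpha,\beta$ is zero $Q$ splits into linear factors; in each such case $N_0$ and the sum $\sum\overline{\eta}(-Q)$ are read off directly, and the identity $\overline{\eta}(-\alpha)=\overline{\eta}(-\beta)$ (valid when $\gamma^2=\alpha\beta$ since then $\overline{\eta}(\alpha\beta)=1$) reconciles the answer with the stated form. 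For the nondegenerate regime $\alpha,\beta\neq0$, $\gamma^2\neq\alpha\beta$, I complete the square as $Q=\alpha u^2+\tfrac{\alpha\beta-\gamma^2}{\alpha}z^2$ with $u=y+\tfrac{\gamma}{\alpha}z$; for even $m$ the equation $Q=0$ has $2(p-1)$ or $0$ punctured solutions according as $\gamma^2-\alpha\beta$ is a nonzero square or a nonsquare, producing the factor $\overline{\eta}(\gamma^2-\alpha\beta)$, and for odd $m$ the full-plane sum $\sum_{\mathbb{F}_p^2}\overline{\eta}(-Q)$ vanishes: expanding $\overline{\eta}$ as a Gauss sum over $s\in\mathbb{F}_p^{*}$ and factoring the resulting double sum into two one-variable Gauss sums $\overline{\eta}(\,\cdot\,)\overline{G}$ yields an inner expression independent of $s$, which is then annihilated by $\sum_{s}\overline{\eta}(s)=0$; subtracting the two boundary lines $y=0$ and $z=0$ (each contributing $(p-1)\overline{\eta}(-\beta)$ resp.\ $(p-1)\overline{\eta}(-\alpha)$) gives the final value.

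The main obstacle is this last nondegenerate case. For even $m$ the delicate point is verifying that the two branches $u=\pm sz$ of $Q=0$ never force $y=0$, which rests on $(\gamma/\alpha)^2-s^2=\beta/\alpha\neq0$, so that the punctured count is exactly $2(p-1)$ rather than $2(p-1)-1$; for odd $m$ the crux is the vanishing of the full-plane character sum of a nondegenerate binary quadratic form, after which the answer is pinned down by the clean boundary contributions. Everything else is bookkeeping, and the two intermediate formulas $\Omega_4=G\big(pN_0-(p-1)^2\big)$ and $\Omega_4=G\overline{G}\sum\overline{\eta}(-Q)$ provide convenient cross-checks across all the listed cases.
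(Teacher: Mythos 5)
Your proposal is correct, and after the forced common opening it executes the case analysis by a genuinely different route than the paper. Both arguments start the same way: interchange the sums, apply Lemma \ref{p4le4} to the $d$-sum to obtain $G\sum_{x,y,z\in\mathbb{F}_p^{*}}\eta(x)\,\chi_{1}\!\big(-(ya+zb)^{2}/(4x)\big)$, and split on the parity of $m$ via Lemma \ref{p4le2}. The paper then expands $\mathrm{Tr}\big((ya+zb)^{2}\big)=\alpha y^{2}+2\gamma yz+\beta z^{2}$ but keeps the triple sum and evaluates it from the inside out, case by case: the $z$-sum by Lemma \ref{p4le4} over $\mathbb{F}_p$ (or orthogonality when $\beta=0$), then the $y$- and $x$-sums. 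You instead sum over $x$ first, which yields the two master formulas $\Omega_{4}=G\big(pN_{0}-(p-1)^{2}\big)$ for even $m$, where $N_{0}$ counts punctured zeros of $Q(y,z)=\alpha y^{2}+2\gamma yz+\beta z^{2}$, and $\Omega_{4}=G\overline{G}\sum_{(y,z)\in(\mathbb{F}_p^{*})^{2}}\overline{\eta}\big(-Q(y,z)\big)$ for odd $m$; the rest is geometry of the binary form $Q$. I checked the details and they hold: the even nondegenerate count $N_{0}=\big(1+\overline{\eta}(\gamma^{2}-\alpha\beta)\big)(p-1)$ is right, and your observation that $(\gamma/\alpha)^{2}-s^{2}=\beta/\alpha\neq0$ excludes $y=0$ on the branches $u=\pm sz$ is exactly the point that keeps the count at $2(p-1)$; the odd nondegenerate case follows from the vanishing of the full-plane sum of $\overline{\eta}(-Q)$ (your Gauss-sum expansion argument is sound, and equivalently the representation numbers of a nondegenerate binary form are constant on $\mathbb{F}_p^{*}$) minus the axis contributions $(p-1)\big(\overline{\eta}(-\alpha)+\overline{\eta}(-\beta)\big)$; and the reconciliation $\overline{\eta}(-\alpha)=\overline{\eta}(-\beta)$ when $\gamma^{2}=\alpha\beta$ matches the stated value $\big(\overline{\eta}(-\beta)(p-1)-\overline{\eta}(-\alpha)\big)(p-1)G\overline{G}=\overline{\eta}(-\alpha)(p-1)(p-2)G\overline{G}$. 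What your route buys is uniformity: every case reduces to a short count or a one-line character evaluation against the two master formulas, which also serve as cross-checks; what the paper's route buys is self-containedness, since it never needs the quadratic-form counting facts and runs entirely on the iterated one-variable Gauss sums already in its toolkit.
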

\begin{proof} By Lemmas \ref{p4le2} and \ref{p4le4}, we have
	\begin{align*}
	\Omega_{4}&=\sum_{d\in\mathbb{F}_{q}}\sum_{x\in\mathbb{F}_{p}^{*}}\zeta_{p}^{x\mathrm{Tr}(d^{2})}\sum_{y\in\mathbb{F}_{p}^{*}}\zeta_{p}^{y\mathrm{Tr}(ad)}\sum_{z\in\mathbb{F}_{p}^{*}}\zeta_{p}^{z\mathrm{Tr}(bd)}\\
	&=\sum_{x,y,z\in\mathbb{F}_{p}^{*}}\sum_{d\in\mathbb{F}_{q}}\chi_{1}(xd^{2}+(ya+zb)d)\\
	&=\sum_{x,y,z\in\mathbb{F}_{p}^{*}}\chi_{1}\Big(-\frac{(ya+zb)^{2}}{4x}\Big)\eta(x)G\\
	&=\begin{cases}
	G\sum\limits_{x,y,z\in\mathbb{F}_{p}^{*}}\chi_{1}\Big(-\frac{(ya+zb)^{2}}{4x}\Big),&\text{if } 2\mid m;\\
	G\sum\limits_{x,y,z\in\mathbb{F}_{p}^{*}}\chi_{1}\Big(-\frac{(ya+zb)^{2}}{4x}\Big)\overline{\eta}(x),&\text{if } 2\nmid m.
	\end{cases}
	\end{align*}
	For the sake of simplicity, we divide the remaining proof into two cases.\\
	\textbf{Case 1:} Suppose that $2\mid m$. Then
	\begin{align*}
	\Omega_{4}&=\sum\limits_{x,y,z\in\mathbb{F}_{p}^{*}}  \chi_{1}\Big(-\frac{(ya+zb)^{2}}{4x}\Big)G\\
	&=G\sum_{x,y,z\in\mathbb{F}_{p}^{*}}\chi_{1}\big(\frac{-y^{2}a^{2}-z^{2}b^{2}-2yzab}{4x}\big)\\
	&=G\sum_{x,y\in\mathbb{F}_{p}^{*}}\overline{\chi}_{1}\big(-\frac{y^{2}}{4x}\alpha\big)\sum_{z\in\mathbb{F}_{p}^{*}}\overline{\chi}_{1}\big(-\frac{z^{2}}{4x}\beta-\frac{yz}{2x}\gamma\big)\\
	&=\begin{cases}
	G(p-1)^{3},& \text{if } \alpha=0,~ \beta=0 \text{ and } \gamma=0;\\
	-G(p-1)^{2},& \text{if } \alpha=0,~ \beta=0 \text{ and } \gamma\neq0;\\
	-G(p-1)^{2},& \text{if } \alpha=0,~ \beta\neq0 \text{ and } \gamma=0;\\
	\overline{\eta}(-1)G\overline{G}^{2 }(p-1)-G(p-1)^{2},& \text{if } \alpha=0,~ \beta\neq0 \text{ and } \gamma\neq0;\\
	-G(p-1)^{2},& \text{if } \alpha\neq0,~ \beta=0 \text{ and } \gamma=0;\\
	G(p-1), &\text{if } \alpha\neq0,~ \beta=0 \text{ and } \gamma\neq0;\\
	G(p-1),&\text{if } \alpha\neq0,~ \beta\neq0 \text{ and } \alpha\beta=\gamma^{2}; \\
	\Big(\overline{\eta}\big(\alpha\beta-\gamma^{2}\big)\overline{G}^{2}+1\Big)G(p-1),&\text{if } \alpha\neq0,~ \beta\neq0 \text{ and } \alpha\beta\neq\gamma^{2}.
	\end{cases}
	\end{align*}
	\textbf{Case 2:} Assume that $2\nmid m$. Then
	\begin{align*}
	\Omega_{4}&=G\sum\limits_{x,y,z\in\mathbb{F}_{p}^{*}}\chi_{1}\Big(-\frac{(ya+zb)^{2}}{4x}\Big)\overline{\eta}(x)\\
	&=G\sum_{x,y,z\in\mathbb{F}_{p}^{*}}\chi_{1}\big(\frac{-y^{2}a^{2}-z^{2}b^{2}-2yzab}{4x}\big)\overline{\eta}(x)\\
	&=G\sum_{x,y\in\mathbb{F}_{p}^{*}}\overline{\chi}_{1}\big(-\frac{y^{2}}{4x}\alpha\big)\overline{\eta}(x)\sum_{z\in\mathbb{F}_{p}^{*}}\overline{\chi}_{1}\big(-\frac{z^{2}}{4x}\beta-\frac{yz}{2x}\gamma\big)\\
	&=\begin{cases}
	0,& \text{if } \alpha=0\text{ and } \beta=0;\\
	G\overline{G}\overline{\eta}(-\beta)(p-1)^{2},& \text{if } \alpha=0,~ \beta\neq0 \text{ and } \gamma=0;\\
	-G\overline{G}\overline{\eta}(-\beta)(p-1),& \text{if } \alpha=0,~ \beta\neq0 \text{ and } \gamma\neq0;\\
	\overline{\eta}(-\alpha)G\overline{G}(p-1)^{2},& \text{if } \alpha\neq0,~ \beta=0 \text{ and } \gamma=0;\\
	-\overline{\eta}(-\alpha)G\overline{G}(p-1), &\text{if } \alpha\neq0,~ \beta=0 \text{ and } \gamma\neq0;\\
	\Big(\overline{\eta}(-\beta)(p-1)-\overline{\eta}(-\alpha)\Big)G\overline{G}(p-1),&\text{if } \alpha\neq0,~ \beta\neq0 \text{ and } \alpha\beta=\gamma^{2} ;\\
	-\Big(\overline{\eta}(-\beta)+\overline{\eta}(-\alpha)\Big)G\overline{G}(p-1),&\text{if } \alpha\neq0,~ \beta\neq0 \text{ and } \alpha\beta\neq\gamma^{2}.
	\end{cases}
	\end{align*}
	Thus the result is established.
\end{proof}
\begin{lemma}\label{p4le17}
	 For $a,b\in\mathbb{F}_{q}^{*}$, where $a\neq wb$ for all $w\in\mathbb{F}_{p}^{*}$, let
	$$\Omega=\#\{d\in\mathbb{F}_{q}:\mathrm{Tr}(d^{2})=0,\mathrm{Tr}(ad)=0 \text{ and }\mathrm{Tr}(bd)=0\}.$$
	Then\\
	$1.$ if  $m$ is even, we have
	$$\Omega=\begin{cases}
	p^{m-3}+p^{-1}(p-1)G,& \text{if } \alpha=\beta=\gamma=0;\\
		p^{m-3},&\text{if } \alpha\neq0,~ \beta\neq0 \text{ and } \alpha\beta=\gamma^{2} ;\\
	p^{m-3}+\overline{\eta}\big(\gamma^{2}-\alpha\beta\big)p^{-2}(p-1)G,&\text{if } \alpha\neq0,~ \beta\neq0 \text{ and } \alpha\beta\neq\gamma^{2};\\
	p^{m-3}+p^{-2}(p-1)G,& \text{if } \gamma\neq0 \text{ and atmost one of } \alpha \text{ and } \beta \text{ is nonzero};\\
	p^{m-3},& \text{if } \gamma=0 \text{ and exactly one of } \alpha\text{ and } \beta \text{ is nonzero}.
	\end{cases}
	$$
	$2.$ if $m$ is odd, we have
	$$\Omega=\begin{cases}
	p^{m-3}+\overline{\eta}(-\alpha)p^{-2}(p-1)G\overline{G},& \text{if } \alpha\neq0 \text{ and }\beta=\gamma=0;\\
	p^{m-3},&\text{if } \alpha\neq0,~ \beta\neq0 \text{ and } \alpha\beta\neq\gamma^{2}\text {or }\alpha=\beta=0;\\
	p^{m-3},& \text{if } \gamma\neq0 \text{ and exactly one of }\alpha\text{ and }\beta \text{ is nonzero};\\
	p^{m-3}+\overline{\eta}(-\beta)p^{-2}(p-1)G\overline{G},&\text{if } \alpha\neq0,~ \beta\neq0 \text{ and } \alpha\beta=\gamma^{2}\text{ or }\beta\neq0\text{ and }\alpha=\gamma=0.
	\end{cases}$$
\end{lemma}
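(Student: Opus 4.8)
The plan is to count $\Omega$ through additive characters, exactly paralleling the proof of Lemma \ref{p4le15}. Using the orthogonality relation for the canonical additive character $\chi_{1}$ of $\mathbb{F}_{q}$, I would first write
\begin{align*}
\Omega=\frac{1}{p^{3}}\sum_{d\in\mathbb{F}_{q}}\Big(\sum_{x\in\mathbb{F}_{p}}\zeta_{p}^{x\mathrm{Tr}(d^{2})}\Big)\Big(\sum_{y\in\mathbb{F}_{p}}\zeta_{p}^{y\mathrm{Tr}(ad)}\Big)\Big(\sum_{z\in\mathbb{F}_{p}}\zeta_{p}^{z\mathrm{Tr}(bd)}\Big).
\end{align*}
Splitting each inner sum into its $0$-term (contributing $1$) and its sum over the nonzero index, then expanding the product of the three factors, produces eight pieces indexed by which factors carry a nontrivial character sum.

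Next I would identify each piece. The all-trivial piece sums to $q=p^{m}$. The single piece carrying only $\mathrm{Tr}(d^{2})$ equals $G\sum_{x\in\mathbb{F}_{p}^{*}}\eta(x)$, which by Lemma \ref{p4le2} is $G(p-1)$ for even $m$ and $0$ for odd $m$. The two mixed pieces combining $\mathrm{Tr}(d^{2})$ with $\mathrm{Tr}(ad)$ and with $\mathrm{Tr}(bd)$ are precisely $K_{a}$ and $K_{b}$ of Lemma \ref{p4le6} (taking $\xi=a$ and $\xi=b$), whose values depend only on whether $\alpha=\mathrm{Tr}(a^{2})$ and $\beta=\mathrm{Tr}(b^{2})$ vanish, while the single all-nontrivial piece is exactly $\Omega_{4}$ of Lemma \ref{p4le16}.

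The two pieces carrying only $\mathrm{Tr}(ad)$ or only $\mathrm{Tr}(bd)$ vanish because $a,b\in\mathbb{F}_{q}^{*}$ and $\sum_{d}\chi_{1}(yad)=0$ for $y\neq0$. The crucial remaining piece is the mixed term in only $\mathrm{Tr}(ad)$ and $\mathrm{Tr}(bd)$, namely $\sum_{y,z\in\mathbb{F}_{p}^{*}}\sum_{d}\chi_{1}\big((ya+zb)d\big)=q\cdot\#\{(y,z)\in(\mathbb{F}_{p}^{*})^{2}:ya+zb=0\}$; this is where the hypothesis $a\neq wb$ for all $w\in\mathbb{F}_{p}^{*}$ enters, since $ya+zb=0$ with $y,z\in\mathbb{F}_{p}^{*}$ would force $a=-(z/y)b$ with $-(z/y)\in\mathbb{F}_{p}^{*}$, which is excluded. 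Hence this piece is $0$, and I obtain $\Omega=\frac{1}{p^{3}}\big(p^{m}+G\sum_{x\in\mathbb{F}_{p}^{*}}\eta(x)+K_{a}+K_{b}+\Omega_{4}\big)$.

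Finally I would substitute the explicit values of $K_{a}$, $K_{b}$ (Lemma \ref{p4le6}) and $\Omega_{4}$ (Lemma \ref{p4le16}) and simplify case by case according to which of $\alpha,\beta,\gamma$ vanish and the value of $\overline{\eta}(\gamma^{2}-\alpha\beta)$. The only nontrivial bookkeeping, and the main though routine obstacle, is verifying that the several sub-cases merged in the statement (for instance, grouping ``$\alpha\beta\neq\gamma^{2}$'' with ``$\alpha=\beta=0$'' when $m$ is odd) indeed yield one common closed form; in each such instance the $G$- or $G\overline{G}$-terms telescope via identities like $(p-1)\big[1+(p-1)\big]=(p-1)p$, leaving the stated value. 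Everything else is a direct transcription of the machinery already assembled for Lemma \ref{p4le15}.
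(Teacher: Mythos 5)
Your proposal is correct and essentially reproduces the paper's own proof: both expand $\Omega$ by orthogonality of additive characters, kill the pieces involving only $\mathrm{Tr}(ad)$ or only $\mathrm{Tr}(bd)$ (the mixed $y,z$-piece vanishing precisely by the hypothesis $a\neq wb$), and evaluate the surviving terms via Lemmas \ref{p4le6} and \ref{p4le16}. The only cosmetic difference is bookkeeping: the paper collapses the $z$-trivial block into $N_{a}/p$ by quoting Lemma \ref{p4le7}, whereas you expand it fully into $p^{m}+G\sum_{x\in\mathbb{F}_{p}^{*}}\eta(x)+K_{a}$, which is the same quantity, and your final case-by-case substitutions check out.
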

\begin{proof} By definition of additive character, we have
	\begin{align*}
	\Omega&=\frac{1}{p^{3}}\sum_{d\in\mathbb{F}_{q}}\Big(\sum_{x\in\mathbb{F}_{p}}\zeta_{p}^{x\mathrm{Tr}(d^{2})}\Big)\Big(\sum_{y\in\mathbb{F}_{p}}\zeta_{p}^{y\mathrm{Tr}(ad)}\Big)\Big(\sum_{z\in\mathbb{F}_{p}}\zeta_{p}^{z\mathrm{Tr}(bd)}\Big)\\
	&=\frac{1}{p^{3}}\sum_{d\in\mathbb{F}_{q}}\Big(\sum_{x\in\mathbb{F}_{p}}\zeta_{p}^{x\mathrm{Tr}(d^{2})}\Big)\Big(\sum_{y\in\mathbb{F}_{p}}\zeta_{p}^{y\mathrm{Tr}(ad)}\Big)\Big(1+\sum_{z\in\mathbb{F}_{p}^{*}}\zeta_{p}^{z\mathrm{Tr}(bd)}\Big)\\
	&=\frac{N_{a}}{p}+\frac{1}{p^{3}}(\Omega_{1}+\Omega_{2}+\Omega_{3}+\Omega_{4}),
	\end{align*}
	where
	\begin{align*}
	\Omega_{1}&=\sum_{d\in\mathbb{F}_{q}}\sum_{z\in\mathbb{F}_{p}^{*}}\zeta_{p}^{z\mathrm{Tr}(bd)}=\sum_{z\in\mathbb{F}_{p}^{*}}\sum_{d\in\mathbb{F}_{q}}\chi_{1}(zbd)=0,\\
	\Omega_{2}&=\sum_{d\in\mathbb{F}_{q}}\sum_{x\in\mathbb{F}_{p}^{*}}\zeta_{p}^{x\mathrm{Tr}(d^{2})}\sum_{z\in\mathbb{F}_{p}^{*}}\zeta_{p}^{z\mathrm{Tr}(bd)},\\
	\Omega_{3}&=\sum_{d\in\mathbb{F}_{q}}\sum_{y\in\mathbb{F}_{p}^{*}}\zeta_{p}^{y\mathrm{Tr}(ad)}\sum_{z\in\mathbb{F}_{p}^{*}}\zeta_{p}^{z\mathrm{Tr}(bd)}=\sum_{y\in\mathbb{F}_{p}^{*}}\sum_{z\in\mathbb{F}_{p}^{*}}\sum_{d\in\mathbb{F}_{q}}\chi_{1}((ya+zb)d)=0,\\
	\Omega_{4}&=\sum_{d\in\mathbb{F}_{q}}\sum_{x\in\mathbb{F}_{p}^{*}}\zeta_{p}^{x\mathrm{Tr}(d^{2})}\sum_{y\in\mathbb{F}_{p}^{*}}\zeta_{p}^{y\mathrm{Tr}(ad)}\sum_{z\in\mathbb{F}_{p}^{*}}\zeta_{p}^{z\mathrm{Tr}(bd)}.
	\end{align*}
	The result now follows from the Lemmas \ref{p4le6}, \ref{p4le7} and \ref{p4le16}.
\end{proof}

\section{Main results}\label{sec4}
Now, we are ready to prove our main results.
For the defining set $D^{*}=\{x\in\mathcal{R}_{m}^{*}: \mathrm{tr}(x^{2})=0\}$, it is complicated  to find weight distribution of the corresponding code $\mathcal{C}_{D^{*}}$ based on construction in \eqref{eq1}. So, we omit this case and  consider the defining set 
$$D=\{d\in\mathbb{F}_{q}^{*}: \mathrm{Tr}(d^{2})=0\}.$$
Let $D=\{d\in\mathbb{F}_{q}^{*}: \mathrm{Tr}(d^{2})=0\}$=$\{d_{1},d_{2},\ldots,d_{n}\}\subset\mathbb{F}_{q}$. Then, for $x=a+ub\in\mathcal{R}_{m}$, define 
\begin{align}
\mathcal{C}_{D}&=\{(\mathrm{tr}(xd_{1}),\mathrm{tr}(xd_{2})\ldots,\mathrm{tr}(xd_{n})):x\in\mathcal{R}_{m}\}\label{eq4}\\
&=\{(\mathrm{Tr}(ad_{1})+u\mathrm{Tr}(bd_{1}),\mathrm{Tr}(ad_{1})+u\mathrm{Tr}(bd_{2})\ldots,\mathrm{Tr}(ad_{n})+u\mathrm{Tr}(bd_{n})):a,b\in\mathbb{F}_{q}\}\nonumber,
\end{align}
and
$$\phi(\mathcal{C}_{D})=\{(\mathrm{Tr}(ad_{1}),\mathrm{Tr}(ad_{2}),\ldots,\mathrm{Tr}(ad_{n}),\mathrm{Tr}(bd_{1}),\mathrm{Tr}(bd_{2}),\ldots,\mathrm{Tr}(bd_{n})):a, b\in\mathbb{F}_{q}\}.$$

Depending on whether $2$ divides $m$ or not, we present the Hamming-weight distribution of  $\mathcal{C}_{D}$ employing symplectic weight distribution of $\phi(\mathcal{C}_{D})$ in the following Theorems.
\begin{theorem}\label{p4theorem1}
	Let $m>2$ be even. Then the introduced linear code  $\mathcal{C}_{D}$ has parameters $[p^{m-1}+p^{-1}(p-1)G-1,m]$  with the weight distribution of Table 3.
\end{theorem}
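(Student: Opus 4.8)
The plan is to transport the whole computation to the Gray image and use the isometry between $(\mathcal{R}^{n},d_{H})$ and $(\mathbb{F}_{p}^{2n},d_{S})$ recorded in Section~\ref{sec2}. Writing $x=a+ub$ with $a,b\in\mathbb{F}_{q}$, the Gray image of the associated codeword is $(\mathrm{Tr}(ad_{1}),\ldots,\mathrm{Tr}(ad_{n}),\mathrm{Tr}(bd_{1}),\ldots,\mathrm{Tr}(bd_{n}))$, whose symplectic weight equals the Hamming weight of the codeword. Hence
$$\mathrm{wt}(a,b)=n-\#\{d\in D:\mathrm{Tr}(ad)=0\text{ and }\mathrm{Tr}(bd)=0\}.$$
Since $D=\{d\in\mathbb{F}_{q}^{*}:\mathrm{Tr}(d^{2})=0\}$ is obtained from the set counted by $\Omega$ in Lemma~\ref{p4le17} by deleting $d=0$ (which meets all three trace conditions), the inner count is $\Omega-1$, so $\mathrm{wt}(a,b)=n+1-\Omega$. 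The length itself is $n=n_{0}-1=p^{m-1}+p^{-1}(p-1)G-1$ by the even-$m$, $c=0$ branch of Lemma~\ref{p4le5}, which already yields the first parameter.

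First I would dispose of the pairs that Lemma~\ref{p4le17} does not cover, namely those with $a,b$ being $\mathbb{F}_{p}$-linearly dependent. If $(a,b)=(0,0)$ the codeword is zero, giving the single weight-$0$ word. If exactly one of $a,b$ vanishes, or if $a=wb$ for some $w\in\mathbb{F}_{p}^{*}$, then $\mathrm{Tr}(ad)=w\,\mathrm{Tr}(bd)$ and the two conditions collapse to one, so the relevant count is $N_{b}$ (resp.\ $N_{a}$) from Lemma~\ref{p4le7} and $\mathrm{wt}=n+1-N_{b}$. Feeding in the even-$m$ values of $N_{b}$ according to whether $\beta=\mathrm{Tr}(b^{2})$ vanishes yields the weights $p^{m-1}-p^{m-2}$ and $p^{m-1}+p^{-1}(p-1)G-p^{m-2}$. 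For the generic pairs ($a\neq wb$ for all $w\in\mathbb{F}_{p}^{*}$) I would substitute each branch of $\Omega$ from Lemma~\ref{p4le17} into $\mathrm{wt}=n+1-\Omega$; for instance the branch $\alpha\neq0,\beta\neq0,\alpha\beta=\gamma^{2}$ gives $\Omega=p^{m-3}$ and weight $p^{m-1}+p^{-1}(p-1)G-p^{m-3}$, while the two branches $\overline{\eta}(\gamma^{2}-\alpha\beta)=\pm1$ give the remaining weights, all governed by $\alpha=\mathrm{Tr}(a^{2})$, $\beta=\mathrm{Tr}(b^{2})$, $\gamma=\mathrm{Tr}(ab)$.

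It then remains to find the frequency of each weight. Here the counting Lemma~\ref{p4le15} (Table~1) supplies, for even $m$, the number $N(c_{1},c_{2},c_{3})$ of pairs $(a,b)$ with prescribed $(\alpha,\beta,\gamma)=(c_{1},c_{2},c_{3})$; summing these over the triples lying in each regime identified above gives a first approximation to the multiplicities. The cyclotomic counts of Lemma~\ref{p4le9} are what turn ``number of admissible $(c_{1},c_{2},c_{3})$'' into closed form for the $\overline{\eta}(\gamma^{2}-\alpha\beta)=\pm1$ classes. Finally, checking that the multiplicities sum to $p^{2m}$ confirms both that $\#\mathcal{C}_{D}=|\mathcal{R}|^{m}$—so its dimension over $\mathcal{R}$ is $m$, the second claimed parameter—and that nothing has been miscounted, which in Table~3 encodes the full weight distribution.

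The main obstacle is reconciling Lemma~\ref{p4le15}, which counts all pairs, dependent and independent alike, with Lemma~\ref{p4le17}, whose $\Omega$-values are valid only for $\mathbb{F}_{p}$-linearly independent pairs. Every linearly dependent pair satisfies $\gamma^{2}=\alpha\beta$, so the entire locus $\gamma^{2}-\alpha\beta=0$—which subsumes the triple $(0,0,0)$, the pairs with a vanishing coordinate, and the proper dependent pairs $a=wb$—mixes two distinct weights: the dependent pairs carry the weight dictated by $N_{b}$ (Lemma~\ref{p4le7}), whereas the independent ones carry the weight from the $\alpha\beta=\gamma^{2}$ branch of $\Omega$, and these differ (e.g.\ $p^{m-1}+p^{-1}(p-1)G-p^{m-2}$ versus $p^{m-1}+p^{-1}(p-1)G-p^{m-3}$). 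Separating them—subtracting the $1+2(q-1)+(q-1)(p-1)$ dependent pairs from the raw $N(c_{1},c_{2},c_{3})$ totals on this locus and reassigning them to their correct weight classes—is the delicate bookkeeping; the underlying Gauss-sum evaluations are routine once this split is handled consistently across all branches.
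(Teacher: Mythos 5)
Your proposal is correct and follows essentially the same route as the paper's proof: Hamming weight via the Gray/symplectic isometry as $n_{0}-\aleph(a,b)$, with $\Omega$ from Lemma \ref{p4le17} on $\mathbb{F}_{p}$-independent pairs and $N_{a},N_{b}$ from Lemma \ref{p4le7} on dependent ones, and multiplicities assembled from Lemmas \ref{p4le5}, \ref{p4le9} and \ref{p4le15}. The ``delicate bookkeeping'' you flag on the locus $\gamma^{2}=\alpha\beta$ is exactly what the paper's expressions for $\mathcal{N}_{1}$ and $\mathcal{N}_{3}$ implement through the subtractions $(p+1)n_{0}-p$ and $(p^{2}-1)n_{c}$.
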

\begin{center}
	\textbf{Table 3:}
	The weight distribution of the code $\mathcal{C}_{D}$ in Theorem \ref{p4theorem1}\\
	\begin{tabular}{|p{4.5cm}|p{9.5cm}|}
		\hline
		Weight w & Multiplicity $A_{w}$\\
		\hline
		$0$ & $1$\\
		\hline
		$p^{m-3}(p^{2}-1)$ & $p^{2m-3}-p^{m-3}(p^{3}-p^{2}+3p-1)+p^{-1}(p^{2}-1)(p^{m-2}-1)G+p^{-3}(p-1)^{3}G^{2}+p$\\
		\hline
		$p^{m-3}(p^{2}-1)+p^{-2}(p-1)^{2}G$ & $\frac{1}{2}(p-1)\big(p^{2m-2}(p+1)+p^{m-2}(p^{2}-1)G-2p^{m-2}(2p-1)-p^{-2}(p-1)(p-2)G^{2}\big)$\\
		\hline
		$p^{m-3}(p^{2}-1)+p^{-1}(p-1)G$ & $(p-1)\big(p^{2m-3}(p+1)-p^{m-3}(p^{3}-p^{2}+3p-1)-p^{-3}(2p^{2}-3p+1)G^{2}-p^{-1}(p+1)(p^{m-2}-1)G\big)$\\
		\hline
		$(p^{2}-1)(p^{m-3}+p^{-2}G)$ & $\frac{1}{2}(p-1)^{2}(p^{2m-2}-p^{m-2}(p+1)G+p^{-1}G^{2})$\\
		\hline
		$p^{m-2}(p-1)$ & $(p+1)(p^{m-1}+p^{-1}(p-1)G-1)$\\
		\hline
		$(p-1)(p^{m-2}+p^{-1}G)$ & $(p^{2}-1)(p^{m-1}-p^{-1}G)$\\
		\hline
	\end{tabular}
\end{center}
\begin{proof} Recall that $\alpha=\mathrm{Tr}(a^{2})$, $\beta=\mathrm{Tr}(b^{2})$ and $\gamma=\mathrm{Tr}(ab)$, where $a,b\in\mathbb{F}_{q}^{*}.$ For any $a,b\in\mathbb{F}_{q}$, we define
	$\aleph(a,b)=\#\{d\in\mathbb{F}_{q}:\mathrm{Tr}(d^{2})=0,\mathrm{Tr}(ad)=0 \text{ and }\mathrm{Tr}(bd)=0\}.$
	Then, from Lemmas \ref{p4le5}, \ref{p4le7} and \ref{p4le17}, one can easily obtain that 
	$$\aleph(a,b)=\begin{cases}
	\Omega,&\text{if }a,b\in\mathbb{F}_{q}^{*} \text{ and } a\neq wb \text{ for all }w\in\mathbb{F}_{p}^{*};\\
	N_{a},&\text{if } a\neq0 \text{ and }b=0;\\
	N_{b},&\text{if } b\neq0 \text{ and }a=0 \text{ or }  \text{ for }a,b\in\mathbb{F}_{q}^{*}, a=wb\text{ for some }w\in\mathbb{F}_{p}^{*};\\
	n_{0},&\text{if }a=b=0.
	\end{cases}$$ 
	The distinct values of $\aleph(a,b)$ under different conditions are described in the following table.
\begin{center}
	\begin{tabular}{|p{4.5cm}|p{9.5cm}|}
		\hline
		\text{The value of }$\aleph(a,b)$& \text{Conditions}\\
		\hline
		 $p^{m-3}+p^{-1}(p-1)G$ & $\alpha=\beta=\gamma=0$ \\
		 \hline
		$p^{m-3}+p^{-2}(p-1)G$ & $\alpha\neq0,~\beta\neq0 \text{ and }\overline{\eta}(\gamma^{2}-\alpha\beta)=1$ or $\gamma\neq0$ and atmost one of $\alpha$ and $\beta$ is nonzero \\
		\hline
		$p^{m-3}$ & $\alpha\neq0,~\beta\neq0 \text{ and } \alpha\beta=\gamma^{2}$ or $\gamma=0$ and exactly one of $\alpha$ and $\beta$ is nonzero \\
		\hline
		$p^{m-3}-p^{-2}(p-1)G$& $\alpha\neq0,~\beta\neq0 \text{ and }\overline{\eta}(\gamma^{2}-\alpha\beta)=-1$  \\
		\hline
		$p^{m-2}+p^{-1}(p-1)G$& $\alpha=0$ or $\beta=0$ \\
		\hline
		$p^{m-2}$ &$\alpha\neq0$ or $\beta\neq0$ \\
		\hline
		$p^{m-1}+p^{-1}(p-1)G$ & $a=b=0$\\
		\hline
		\end{tabular}
\end{center}
Let $\textbf{c}\neq0\in\mathcal{C}_{D}$. Then the symplectic-weight $\text{swt}(\phi(\textbf{c}))$ of $\phi(\textbf{c})$ is given by
$$\text{swt}(\phi(\textbf{c}))=n_{0}-\aleph(a,b),$$
and, by Lemma \ref{p4le4} and previous table, has the following six possible values:
\begin{enumerate}[label=(\roman*),leftmargin=*, widest=iii]
	\item $p^{m-3}(p^{2}-1)+p^{-2}(p-1)^{2}G$;
	\item $p^{m-3}(p^{2}-1)$;
	\item $p^{m-3}(p^{2}-1)+p^{-1}(p-1)G$;
	\item $(p^{2}-1)(p^{m-3}+p^{-2}G)$;
	\item $p^{m-2}(p-1)$;
	\item $(p-1) (p^{m-2}+p^{-1}G)$.
\end{enumerate}
 For the sake of simplicity, the number $\mathcal{N}_{i}~(i=1,2,3,4,5,6,7)$ satisfying the related condition is denoted by the following table.
\begin{center}
	\begin{tabular}{|p{10cm}|p{4.5cm}|}
		\hline
		\text{Conditions}&\text{Number of pairs $(a,b)$}\\ &\text{satisfying conditions }\\
		\hline
		 $\alpha=\beta=\gamma=0$ & $\mathcal{N}_{1}$\\ 
		\hline
		 $\alpha\neq0,~\beta\neq0 \text{ and }\overline{\eta}(\gamma^{2}-\alpha\beta)=1$ or $\gamma\neq0$ and atmost one of $\alpha$ and $\beta$ is nonzero  & $\mathcal{N}_{2}$\\
		\hline
		 $\alpha\neq0,~\beta\neq0 \text{ and } \alpha\beta=\gamma^{2}$ or $\gamma=0$ and exactly one of $\alpha$ and $\beta$ is nonzero &$\mathcal{N}_{3}$\\
		\hline
		$\alpha\neq0,~\beta\neq0 \text{ and }\overline{\eta}(\gamma^{2}-\alpha\beta)=-1$  &$\mathcal{N}_{4}$\\
		\hline
		 $\alpha=0$ or $\beta=0$ &$\mathcal{N}_{5}$\\
		\hline
		$\alpha\neq0$ or $\beta\neq0 $ & $\mathcal{N}_{6}$\\
		\hline
	 $a=b=0$&$\mathcal{N}_{7}$\\
		\hline
	\end{tabular}
\end{center}
Let $c,c_{i}\in\mathbb{F}_{p}^{*}$ ($1\leq i\leq3$). Then, from Lemmas \ref{p4le5} and \ref{p4le15}, one can easily acquire that
{\small\begin{align*}
\mathcal{N}_{1}&=N(0,0,0)-\Big(2(n_{0}-1)+(p-1)(n_{0}-1)+1\Big)=N(0,0,0)-\big((p+1)n_{0}-p\big),\\
\mathcal{N}_{2}&=\sum_{\overline{\eta}(c_{3}^{2}-c_{1}c_{2})=1}N(c_{1},c_{2},c_{3})+\sum_{\overline{\eta}(-c_{1}c_{2})=1}N(c_{1},c_{2},0)+(p-1)N(0,0,c_{3})+2(p-1)^{2}N(c_{1},0,c_{3}),\\
\mathcal{N}_{3}&=\sum_{c_{3}^{2}=c_{1}c_{2}}N(c_{1},c_{2},c_{3})+2(p-1)N(c_{1},0,0)-(p^{2}-1)n_{c},\\
\mathcal{N}_{4}&=\sum_{\overline{\eta}(c_{3}^{2}-c_{1}c_{2})=-1}N(c_{1},c_{2},c_{3})+\sum_{\overline{\eta}(-c_{1}c_{2})=-1}N(c_{1},c_{2},0),~
\mathcal{N}_{5}=(p+1)(n_{0}-1),\\
\mathcal{N}_{6}&=(p^{2}-1)n_{c},~
\mathcal{N}_{7}=1.
\end{align*}}\\
Applications of Lemmas \ref{p4le5}, \ref{p4le9} and \ref{p4le15} lead us to the complete proof of the theorem.
\end{proof}

\begin{example}
	Let $p=3~and~m=4$. Then the linear code $\mathcal{C}_{D}$ has the parameters  $[20, 4, 12]$ with Hamming-weight enumerator $1+240z^{12}+2160z^{16}+2000z^{18}+2160z^{20}$. The  code $\mathcal{C}_{D}$ is four weight linear code if $p=3$ and $m=4$.   
\end{example}
\begin{example}
	Let $p=3~and~m=6$. Then the linear code $\mathcal{C}_{D}$ has the parameters  $[260, 6, 162]$ with Hamming-weight enumerator $1+1040z^{162}+1872z^{180}+24960z^{216}+252720z^{228}+149760z^{234}+101088z^{240}$.  
\end{example}
\begin{theorem}\label{p4theorem2}
	Let $m\geq3$ be odd. Then the introduced linear code $\mathcal{C}_{D}$ has parameters $[p^{m-1}-1,m]$  with the weight distribution of Table 4.
\end{theorem}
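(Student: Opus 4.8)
The plan is to follow verbatim the architecture of the proof of Theorem \ref{p4theorem1}, but to feed in the odd-$m$ branches of each auxiliary lemma in place of the even-$m$ ones. First I would fix the length: Lemma \ref{p4le5} with $2\nmid m$ and $c=0$ gives $n_{0}=\#\{v\in\mathbb{F}_{q}:\mathrm{Tr}(v^{2})=0\}=p^{m-1}$, and since $0\notin D$ we get $n=\#D=n_{0}-1=p^{m-1}-1$, the stated length. That the code has dimension $m$ (equivalently $p^{2m}$ codewords, i.e.\ is $\mathcal{R}$-free of rank $m$, matching $\#\mathcal{R}_{m}=q^{2}$) amounts to injectivity of $x\mapsto(\mathrm{tr}(xd_{1}),\dots,\mathrm{tr}(xd_{n}))$, that is, $\mathrm{Tr}(ad)=\mathrm{Tr}(bd)=0$ for every $d\in D$ should force $a=b=0$; this holds because the zero set of the nondegenerate form $\mathrm{Tr}(d^{2})$ spans $\mathbb{F}_{q}$ over $\mathbb{F}_{p}$ for $m\geq 3$, and it is in any case confirmed a posteriori once the frequencies sum to $p^{2m}$.

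For the codeword $\mathbf{c}$ indexed by $(a,b)\in\mathbb{F}_{q}\times\mathbb{F}_{q}$, the linear isometry property of $\phi$ gives $\mathrm{swt}(\phi(\mathbf{c}))=n_{0}-\aleph(a,b)$, where $\aleph(a,b)=\#\{d\in\mathbb{F}_{q}:\mathrm{Tr}(d^{2})=0,\ \mathrm{Tr}(ad)=0,\ \mathrm{Tr}(bd)=0\}$, exactly as in Theorem \ref{p4theorem1} (the surplus $d=0$ being what turns $n$ into $n_{0}$). I would then read off $\aleph(a,b)$ from the odd-$m$ branches of Lemmas \ref{p4le5}, \ref{p4le7} and \ref{p4le17}, according to the four regimes $a=b=0$; $a\neq0,\ b=0$; $a=0,\ b\neq0$ or $a=wb$ for some $w\in\mathbb{F}_{p}^{*}$; and $a,b\in\mathbb{F}_{q}^{*}$ with $a\neq wb$. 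Sorting the outcomes by $\alpha=\mathrm{Tr}(a^{2})$, $\beta=\mathrm{Tr}(b^{2})$, $\gamma=\mathrm{Tr}(ab)$ produces the admissible values of $\aleph(a,b)$, and subtracting each from $n_{0}=p^{m-1}$ gives the list of nonzero weights to be placed in Table 4.

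The real work is the frequency count, and here the odd case genuinely departs from the even one: the surviving Gauss-sum terms carry a product $G\overline{G}$ together with a quadratic-character sign, namely $\overline{\eta}(-\alpha)$, $\overline{\eta}(-\beta)$ or $\overline{\eta}(\gamma^{2}-\alpha\beta)$. Two pairs $(a,b)$ in the same $(\alpha,\beta,\gamma)$-regime but with opposite signs therefore produce \emph{different} weights, so each regime must be subdivided by these signs. I would set up counters $\mathcal{N}_{i}$ exactly as in Theorem \ref{p4theorem1}, now expressing them through the odd-$m$ entries of $N(c_{1},c_{2},c_{3})$ (Lemma \ref{p4le15}, Table 2), through $N(c_{1},c_{2})$ (Lemma \ref{p4le8}), through the sign trichotomy $L_{\pm1}$ of Lemma \ref{p4le9}, and through $n_{c}$ summed over $c$ with $\overline{\eta}(-c)=\pm1$ (Lemma \ref{p4le5}) for the rank-one contributions.

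The main obstacle is precisely this sign-sensitive bookkeeping. One must guarantee that the degenerate overlaps are handled once and only once: in particular the pairs with $a=wb$ ($w\in\mathbb{F}_{p}^{*}$) must be pooled with the rank-one ($N_{b}$) contribution rather than routed through Lemma \ref{p4le17}, and the pairs with $\gamma^{2}=\alpha\beta$ must be split off from the $\overline{\eta}(\gamma^{2}-\alpha\beta)=\pm1$ strata before Lemma \ref{p4le9} is applied. As a global consistency check I would verify $\sum_{w}A_{w}=p^{2m}$ and match $\sum_{w}wA_{w}$ against the direct character-sum evaluation of the total symplectic weight; agreement of these identities (optionally reduced to explicit numbers via the Gauss-sum values of Lemma \ref{p4le1}) certifies the entries of Table 4.
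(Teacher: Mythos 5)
Your proposal follows essentially the same route as the paper's own proof: the identity $\mathrm{swt}(\phi(\mathbf{c}))=n_{0}-\aleph(a,b)$ with $n_{0}=p^{m-1}$, the four-regime case split for $\aleph(a,b)$ fed by the odd-$m$ branches of Lemmas \ref{p4le5}, \ref{p4le7} and \ref{p4le17}, and the sign-stratified counters $\mathcal{N}_{i}$ evaluated through Lemmas \ref{p4le9} and \ref{p4le15} (Table 2), including the correct handling of the $a=wb$ and $\gamma^{2}=\alpha\beta$ degeneracies. Your explicit treatment of the length and of the dimension (injectivity, confirmed a posteriori by $\sum_{w}A_{w}=p^{2m}$) is a small addition the paper leaves implicit, but otherwise the argument is the paper's proof in outline and is correct.
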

\begin{center}
	\textbf{Table 4:}
	The weight distribution of the code $\mathcal{C}_{D}$ in Theorem \ref{p4theorem2}\\
	\begin{tabular}{|p{4.5cm}|p{9.5cm}|}
		\hline
		Weight w & Multiplicity $A_{w}$\\		
		\hline
		$0$ & $1$\\
		\hline
		$p^{m-3}(p^{2}-1)-p^{-2}(p-1)G\overline{G}$ & $\frac{1}{2}(p-1)(p^{2m-3}(p+1)+p^{m-3}(2p-1)(p-1)+p^{m-2}(p+1)G\overline{G}+p^{-4}(p-1)(p^{2}-p+1)G^{2}\overline{G}^{2})-\frac{1}{2}(p^{2}-1)(p^{m-1}+p^{-1}G\overline{G})$\\
		\hline
		$p^{m-3}(p^{2}-1)+p^{-2}(p-1)G\overline{G}$ & $\frac{1}{2}(p-1)(p^{2m-3}(p+1)+p^{m-3}(2p-1)(p-1)-p^{m-2}(p+1)G\overline{G}+p^{-4}(p-1)(p^{2}-p+1)G^{2}\overline{G}^{2})-\frac{1}{2}(p^{2}-1)(p^{m-1}-p^{-1}G\overline{G})$\\
		\hline
		$p^{m-3}(p^{2}-1)$ & $(p-1)^{2}(p^{2m-3}(p+1)-p^{m-3}(2p-1)-p^{-4}(p^{2}-p+1)G^{2}\overline{G}^{2})+p^{2m-2}-(p+1)p^{m-1}+p$\\
		\hline
		$p^{m-2}(p-1)-p^{-2}(p-1)G\overline{G}$ & $\frac{1}{2}(p^{2}-1)(p^{m-1}+p^{-1}G\overline{G})$\\
		\hline
		$p^{m-2}(p-1)+p^{-2}(p-1)G\overline{G}$ & $\frac{1}{2}(p^{2}-1)(p^{m-1}-p^{-1}G\overline{G})$\\
		\hline
		$p^{m-2}(p-1)$ & $(p+1)(p^{m-1}-1)$\\
		\hline		
	\end{tabular}
\end{center}
\begin{proof}  Let $\alpha,\beta$ and $\gamma$ have the same meanings as before. For each pair $(a,b)\in\mathbb{F}_{q}\times\mathbb{F}_{q}$, we define
	$\aleph(a,b)=\#\{d\in\mathbb{F}_{q}:\mathrm{Tr}(d^{2})=0,\mathrm{Tr}(ad)=0 \text{ and }\mathrm{Tr}(bd)=0\}.$
	Then, from Lemmas \ref{p4le5}, \ref{p4le7} and \ref{p4le17}, one can easily obtain that \\
	$$\aleph(a,b)=\begin{cases}
\Omega,&\text{if }a,b\in\mathbb{F}_{q}^{*} \text{ and } a\neq wb \text{ for all }w\in\mathbb{F}_{p}^{*};\\
N_{a},&\text{if } a\neq0 \text{ and }b=0;\\
N_{b},&\text{if } b\neq0 \text{ and }a=0 \text{ or }  \text{ for }a,b\in\mathbb{F}_{q}^{*}, a=wb\text{ for some }w\in\mathbb{F}_{p}^{*};\\
n_{0},&\text{if }a=b=0.
\end{cases}$$ 
	The distinct values of $\aleph(a,b)$ under different conditions are described in the following table.
	\begin{center}
		\begin{tabular}{|p{4.5cm}|p{9.5cm}|}
			\hline
			\text{The value of }$\aleph(a,b)$& \text{Conditions}\\
			\hline
			$p^{m-3}+p^{-2}(p-1)G\overline{G}$&$\overline{\eta}(-\alpha)=1$ and $\beta=\gamma=0$ or $\alpha\neq0,\overline{\eta}(-\beta)=1$ and $\alpha\beta=\gamma^{2}$ or $\overline{\eta}(-\beta)=1$ and $\alpha=\gamma=0$\\
			\hline
			$p^{m-3}-p^{-2}(p-1)G\overline{G}$&$\overline{\eta}(-\alpha)=-1$ and $\beta=\gamma=0$ or $\alpha\neq0,\overline{\eta}(-\beta)=-1$ and $\alpha\beta=\gamma^{2}$ or $\overline{\eta}(-\beta)=-1$ and $\alpha=\gamma=0$\\
			\hline
			$p^{m-3}$ & $\alpha\neq0,~\beta\neq0 \text{ and } \alpha\beta\neq\gamma^{2}$ or $\alpha=\beta=\gamma=0$ or $\alpha=\beta=0 \text{ and }\gamma\neq0$ or $\gamma\neq0$ and exactly one of $\alpha$ and $\beta$ is nonzero\\
			\hline		
				$p^{m-2}+p^{-2}(p-1)G\overline{G}$& $\overline{\eta}(-\alpha)=1$ or $\overline{\eta}(-\beta)=1$\\
				\hline
					$p^{m-2}-p^{-2}(p-1)G\overline{G}$& $\overline{\eta}(-\alpha)=-1$ or $\overline{\eta}(-\beta)=-1$\\
				\hline	
			$p^{m-2}$& $ \alpha=0$ or $\beta=0$ \\
		\hline
			$p^{m-1}$&$a=b=0$\\
			\hline
		\end{tabular}
	\end{center}
Let $\textbf{c}\neq0\in\mathcal{C}_{D}$. Then the symplectic-weight $\text{swt}(\phi(\textbf{c}))$ of $\phi(\textbf{c})$ is given by
$$\text{swt}(\phi(\textbf{c}))=n_{0}-\aleph(a,b),$$
and, by Lemma \ref{p4le4} and previous table, has the following six possible values:
\begin{enumerate}[label=(\roman*),leftmargin=*, widest=iii]
	\item $p^{m-3}(p^{2}-1)-p^{-2}(p-1)G$;
	\item $p^{m-3}(p^{2}-1)+p^{-2}(p-1)G$;
	\item $p^{m-3}(p^{2}-1)$;
	\item $p^{m-2}(p^{2}-1)-p^{-2}(p-1)G$;
	\item $p^{m-2}(p^{2}-1)+p^{-2}(p-1)G$;
	\item $p^{m-2}(p-1)$.
\end{enumerate}
 The number $\mathcal{N}_{i}~(i=1,2,3,4,5)$ satisfying the related condition is denoted by the following table.
\begin{center}
	\begin{tabular}{|p{10cm}|p{4cm}|}
		\hline
		\text{Conditions}&\text{Number of pairs $(a,b)$}\\ &\text{satisfying conditions }\\
		\hline
	$\overline{\eta}(-\alpha)=1$ and $\beta=\gamma=0$ or $\alpha\neq0,\overline{\eta}(-\beta)=1$ and $\alpha\beta=\gamma^{2}$ or $\overline{\eta}(-\beta)=1$ and $\alpha=\gamma=0$&$\mathcal{N}_{1}$\\
		\hline
		$\overline{\eta}(-\alpha)=-1$ and $\beta=\gamma=0$ or $\alpha\neq0,\overline{\eta}(-\beta)=-1$ and $\alpha\beta=\gamma^{2}$ or $\overline{\eta}(-\beta)=-1$ and $\alpha=\gamma=0$ &  $\mathcal{N}_{2}$\\
		\hline
		$\alpha\neq0,~\beta\neq0 \text{ and } \alpha\beta\neq\gamma^{2}$ or $\alpha=\beta=\gamma=0$ or $\alpha=\beta=0 \text{ and }\gamma\neq0$ or $\gamma\neq0$ and exactly one of $\alpha$ and $\beta$ is nonzero  &$\mathcal{N}_{3}$\\
		\hline
		$\overline{\eta}(-\alpha)=1$ \text{ or } $\overline{\eta}(-\beta)=1$ & $\mathcal{N}_{4}$\\
		\hline
		$\overline{\eta}(-\alpha)=-1$ \text{ or } $\overline{\eta}(-\beta)=-1$ & $\mathcal{N}_{5}$\\
		\hline
		$\alpha=0$ or $\beta=0$ & $\mathcal{N}_{6}$\\
				\hline
		$a=b=0$& $\mathcal{N}_{7}$\\
		\hline
	\end{tabular}
\end{center}
\begin{small}
\begin{align*}
\mathcal{N}_{1}&=\sum_{\overline{\eta}(-c_{1})=1}N(c_{1},0,0)+\sum_{\overline{\eta}(-c_{2})=1,c_{1}c_{2}=c_{3}^{2}}N(c_{1},c_{2},c_{3})+\sum_{\overline{\eta}(-c_{2})=1}N(0,c_{2},0)-(p+1)\sum_{\overline{\eta}(-c)=1}n_{c},\\
\mathcal{N}_{2}&=\sum_{\overline{\eta}(-c_{1})=-1}N(c_{1},0,0)+\sum_{\overline{\eta}(-c_{2})=-1,c_{1}c_{2}=c_{3}^{2}}N(c_{1},c_{2},c_{3})+\sum_{\overline{\eta}(-c_{2})=-1}N(0,c_{2},0)-(p+1)\sum_{\overline{\eta}(-c)=-1}n_{c},\\
\mathcal{N}_{3}&=\sum_{c_{1}c_{2}\neq c_{3}^{2}}N(c_{1},c_{2},c_{3})+3(p-1)^{2}N(c_{1},c_{2},0)+(p-1)N(0,0,c_{3})+N(0,0,0)-\big((p+1)n_{0}-p\big),\\
\mathcal{N}_{4}&=(p+1)\sum_{\overline{\eta}(-c)=1}n_{c},~
\mathcal{N}_{5}=(p+1)\sum_{\overline{\eta}(-c)=-1}n_{c},~
\mathcal{N}_{6}=(p+1)(n_{0}-1),~
\mathcal{N}_{7}=1.
\end{align*}
\end{small}	
The proof of the theorem follows by Lemmas \ref{p4le5}, \ref{p4le9} and \ref{p4le15}.
\end{proof}
\begin{example}
	Let $p=m=3$. Then the linear code $\mathcal{C}_{D}$ has the parameters  $[8, 3, 4]$ with the Hamming-weight enumerator $1+48 z^{4}+224z^{6}+456z^{8}$. The linear code $\mathcal{C}_{D}$ is three-weight linear code  if  p=m=3.\end{example}

\begin{example}
	Let $p=3$  and $m=5$. Then the linear code $\mathcal{C}_{D}$ has the parameters  $[80, 5, 54]$ with the Hamming-weight enumerator $1+360z^{48}+320z^{54}+288z^{60}+11520z^{66}+40800z^{72}+5760z^{78}$.\end{example}
\section{Conclusions}\label{sec5}
In the present paper, we have constructed a family of a few weight linear codes based on the defining set. Moreover, the  weight distributions of the linear codes have been determined by employing symplectic-weight distributions of their Gray images. We have used a relatively new approach to determine Hamming-weight distributions of linear codes over $\mathbb{F}_{p}+u\mathbb{F}_{p}$. In the same manner, more codes may be constructed over such rings based on defining sets, and we leave this for future research work.

\textbf{Acknowledgements.} This research work was done as a part of Pavan Kumar's Ph.D. thesis with the affiliation Aligarh Muslim University, and was supported by the University Grants Commission, New Delhi, India, under  JRF in Science, Humanities $\&$ Social Sciences scheme with Grant number  11-04-2016-413564.

\end{document}